\newcolumntype{L}{>{\centering\arraybackslash}m{3cm}}
\newtheorem{theorem}{Theorem}
\newtheorem{theorem*}[theorem]{Theorem*}
\newtheorem{definition}{Definition}
\newtheorem{definition*}[definition]{Definition*}
\newtheorem{example}{Example}
\newtheorem{example*}[example]{Example*}
\newtheorem{lemma}{Lemma}
\newtheorem{lemma*}[lemma]{Lemma*}
\newtheorem{proposition*}[proposition]{Proposition*}
\newcommand{\N}{\mathbb{N}}
\newcommand{\R}{\mathbb{R}}
\newcommand{\lbox}{7.5cm}
\begin{document}
\bibliographystyle{elsart-harv}
\title{Empirical strategy-proofness\footnote{Thanks to James Andreoni, Antonio Cabrales, Marco Castillo, Yeon-Koo Che, Cary Deck, Huiyi Guo, Utku Unver and seminar participants in Boston College, NC State U., Ohio State U., UCSD, UT Dallas, SAET19, 7th TETC, and North American Meetings ESA 2019, COMID20 for useful comments. Special thanks to the authors of \citet{Attiyeh2000-PC,Cason-et-al-2006-GEB,CHEN-Sonmez-2006-JET,HEALY-2006-JET,
Andreoni-Che-Kim-2007-GEB}; and \citet{Li-AER-17} whose data is either publicly available or has been made available for our analysis. All errors are our own.}}
\date{\today}

\author{Rodrigo A. Velez\thanks{
\href{mailto:rvelezca@tamu.edu}{rvelezca@tamu.edu}; \href{https://sites.google.com/site/rodrigoavelezswebpage/home}{https://sites.google.com/site/rodrigoavelezswebpage/home}}\ \ and Alexander L. Brown\thanks{
 \href{mailto:alexbrown@tamu.edu}{alexbrown@tamu.edu}; \href{http://people.tamu.edu/\%7Ealexbrown}{http://people.tamu.edu/$\sim$alexbrown}}\\\small{\textit{Department of
Economics, Texas A\&M University, College Station, TX 77843}}}
\maketitle

\begin{abstract}
Empirical tests of direct revelation games of strategy-proof social choice functions show behavior in these games can approximate suboptimal mutual best responses. We determine that this worst case scenario for the operation of these mechanisms is only likely to be observed when the social choice function violates a non-bossiness condition and information is not interior.  Our analysis is based on an empirical approach to the refinement of Nash equilibrium that we introduce, and the characterization of direct robust full implementation based on this solution concept. Experimental and empirical evidence on these games supports our findings.
\medskip

\begin{singlespace}
\textit{JEL classification}: C72, D47, D91.
\medskip

\textit{Keywords}: behavioral mechanism design; empirical equilibrium; robust mechanism design; strategy-proofness.\medskip
\end{singlespace}

\end{abstract}

\section{Introduction}\label{Sec:intro}

Strategy proofness, a coveted property in market design, requires that truthful reports be dominant strategies in the simultaneous direct  revelation game associated with a social choice function (scf). Despite the theoretical appeal of this property, experimental and empirical evidence suggests that when an scf satisfying this property is operated, agents may persistently exhibit weakly dominated behavior. This includes experiments with diverse mechanisms \citep{COPPINGER-et-al-1980,Kegel-et-al-1987-Eca,Kegel-Levin-1993-EJ,Harstad-2000-EE,Attiyeh2000-PC,CHEN-Sonmez-2006-JET,Cason-et-al-2006-GEB,HEALY-2006-JET,
Andreoni-Che-Kim-2007-GEB,Li-AER-17}, survey evidence from national level matching platforms \citep{REESJONES-2017-GEB,Hassidim-et-al-2016}, and empirical evidence from school choice mechanisms \citep{Artemov-Che-He-2017,Chen-Pereyra-2018,Shorrer-Sovago-2019}. More strikingly, several studies have documented approximate suboptimal equilibria of these games (e.g., \citet{Cason-et-al-2006-GEB,HEALY-2006-JET,
Andreoni-Che-Kim-2007-GEB}; see also Sec.~\ref{Sec:expevidence} for a detailed analysis). That is, persistent weakly dominated behavior that approaches mutual best responses and produces with positive probability outcomes that are different from those intended by the social choice function. The purpose of this paper is to understand this phenomenon. That is, our objective is to unveil the conditions under which this worst case scenario for the operation of a strategy-proof social choice function can happen.

Our study contributes to the growing literature that, motivated by empirical evidence, studies which strategy-proof scfs admit implementation by means of mechanisms that have additional incentives properties. This literature has identified the conditions in which strategy-proof games have no sub-optimal equilibria \citep{Saijo-et-al-2007-TE}. It has also identified scfs that admit implementation by an extensive-form dominant strategy mechanism in which an agent's choice of a dominant strategy is simpler than in the scf's simultaneous direct revelation mechanism \citep{Li-AER-17,Pycia-Troyan-2019}. In many problems of interest these requirements lead to impossibilities (see Sec.~\ref{Sec:literature} for details). Moreover, the empirical evidence against strategy-proof scfs is not universal for all scfs that violate these conditions. Thus, our study offers a fundamental categorization of strategy-proof scfs in the environments for which the further requirements of \citet{Saijo-et-al-2007-TE}, \citet{Li-AER-17}, and \citet{Pycia-Troyan-2019} cannot be met.

To understand why some suboptimal equilibria of strategy-proof games are empirically relevant, it is natural to analyze them based on an equilibrium refinement. These theories aim to select the Nash equilibria that are plausible in a game. Unfortunately, the most prominent refinements in the literature are silent about the nature of this phenomenon. They either implicitly or explicitly assume admissibility, i.e., that weakly dominated behavior is not plausible (from the seminal tremble-based refinements of \citet{Selten-1975-IJGT} and \citet{Myerson-1978-IJGT}, to their most recent forms in \citet{Milgrom-Mollner-2017-SSRN,Milgrom-Mollner-2018-Eca} and \citet{Fudenberg-He-2018}; see also \citet{10.2307/1912320} and \citet{VanDamme-1991-Springer}  for a survey).\footnote{Economists have seldom challenged admissibility. There are three notable exceptions. \citet{Nachbar-1990-IJGT} and \citet{DEKEL-1992-JET} observed that weakly dominated behavior can result from the evolution of strategies that are updated by means of simple intuitive rules. Perhaps the study that is most skeptical of admissibility is \citet{SAMUELSON-1992-GEB}, who shows that it has no solid epistemic foundation in all games.}

We propose an alternative path to refine Nash equilibria. It allows us to understand the incentives for truthful revelation across different strategy-proof mechanisms and to come to terms with the existing experimental and empirical evidence. Our refinement is based on an inverse approach to that taken by the existing tremble-based equilibrium refinements literature.  At a high level, one can describe these refinements as follows. Based on a particular game in which a Nash equilibrium is intuitively implausible, the researchers identify a property that is not satisfied by this equilibrium but is always satisfied by some equilibrium in every possible game. Our proposal is to work in the opposite order. That is, we begin our quest by identifying a property of behavior that has empirical support and work our ways backward to identify the equilibria that can possibly be observed, at least approximately, if behavior will satisfy this property.

More formally, we articulate an empirical approach to refine Nash equilibria. That is, we consider a researcher who samples behavior in normal-form games and constructs a theory that explains it. The researcher then determines the plausibility of Nash equilibria based on the empirical content of the theory by requiring that Nash equilibria be in its closure. More precisely, if a Nash equilibrium cannot be approximated to an arbitrary degree by the empirical content of the researcher's theory, it is identified as implausible or unlikely to be observed. One can give  this approach a static or dynamic interpretation. First, it simply articulates the logical implication of the hypothesis that the researcher's theory is well specified, for this hypothesis is refuted by the observation of a Nash equilibrium that is flagged as implausible. Alternatively, suppose that the researcher hypothesizes that behavior will eventually converge to a Nash equilibrium through an unmodeled evolutionary process that produces a path of behavior that is consistent with her theory. Then, the researcher can also conclude that the only Nash equilibria that are relevant are those that are not flagged as implausible.

To make this empirical approach concrete we identify a non-parametric theory for which there is empirical support. We consider a hypothetical environment in which a researcher observes payoffs, controls the information available to each agent, and samples behavior.\footnote{We believe this is the right benchmark to articulate our empirical approach, for the structural theories for the analysis of data when payoffs are not observable usually make equilibrium selection identifying assumptions. Our objective is to obtain these selections from basic testable hypotheses.}  This framework encompasses the experimental environments with monetary payoffs and the observable game matrix framework used for the foundation of Nash equilibrium by \citet{Harsanyi-1973-IJGT}. We observe that a common factor in the noisy best response models that have been proposed to account for observed behavior in these environments---either sampled or hypothetical as in \citet{Harsanyi-1973-IJGT}, require consistency with an a priori restriction for which there is empirical support, \textit{weak payoff monotonicity}.\footnote{These models include the exchangeable randomly perturbed payoff models \citep{Harsanyi-1973-IJGT,VanDamme-1991-Springer}, the control cost model \citep{VanDamme-1991-Springer},  the monotone structural QRE model \citep{mckelvey:95geb},  and the regular QRE models \citep{Mackelvey-Palfrey-1996-JER,Goeree-Holt-Palfrey-2005-EE}. The common characteristic of these models that make them suitable for our purpose is that their parametric forms are independent of the game in which they are applied, and have been successful in replicating comparative statics in a diversity of games \citep{Goeree-Holt-Palfrey-2005-EE}. See \citet{Goeree-et-al-2018-EJ} for a related discussion.} This property of the full profile of empirical distributions of play in a game requires that for each agent, differences in behavior reveal differences in expected utility. That is, between two alternative actions for an agent, say $a$ and $b$, if the agent plays $a$ with higher frequency than $b$, it is because given what the other agents are doing, $a$ has higher expected utility than $b$ (see Sec.~\ref{Sec:example} for an intuitive example).

\textit{Empirical equilibrium} is the refinement so defined based on weak payoff monotonicity, i.e., the one that selects each Nash equilibrium for which there is a sequence of weakly payoff monotone behavior that converges to it.

Empirical equilibria exist for each finite game and may admit  weakly dominated behavior. Indeed, the limits of logistic QRE (as the noisy best responses converge to best responses) are empirical equilibria \citep{mckelvey:95geb}. It is known that for each finite game these limits exist and that they may admit weakly dominated behavior \citep{mckelvey:95geb}. Furthermore, empirical equilibrium is independent from the refinements previously defined in the literature. Determining this has a technical nature, which we have pursued in a companion paper \citep{Velez-Brown-2018-EE} (Sec.~\ref{Sec:literature} summarizes these findings). Since our interest in this paper is the analysis of strategy-proof mechanisms, we concentrate on the application of the empirical equilibrium refinement to these games. It is worth noting that the games we analyze give us the chance to show the full power of this refinement to rule out implausible behavior without ruling out behavior that is prevalent in experiments.

We submit that we can considerably advance our understanding of the direct revelation game of a strategy-proof scf by calculating its empirical equilibria. On the one hand, suppose that we find that for a certain game each empirical equilibrium is truthful equivalent. Then, we learn that as long as empirical distributions of play are weakly payoff monotone,  behavior will never approximate a sub-optimal Nash equilibrium. On the other hand, if we find that some empirical equilibria are not truthful equivalent, this alerts us about the possibility that we may plausibly observe persistent behavior that generates sub-optimal outcomes and approximates mutual best responses.

We present two main results. The first is that non-bosiness in welfare-outcome---i.e., the requirement on an scf that no agent be able to change the outcome without changing her own welfare---is \textit{necessary and sufficient} to guarantee that for each common prior type space, each empirical equilibrium of the direct revelation game of a strategy-proof scf in a private values environment, produces, with certainty, the truthful outcome (Theorem~\ref{Th:str-pr}). The second is that the requirement that a strategy-proof scf have essentially unique dominant strategies, characterizes this form of robust implementation for type spaces with full support (Theorem~\ref{Thm:Interior}). The sharp predictions of our theorems are consistent with experimental and empirical evidence on strategy-proof mechanisms (Sec.~\ref{Sec:expevidence}). Indeed, they are in line with some of the most puzzling evidence on the second-price auction, a strategy-proof mechanism that violates non-bosiness but whose dominant strategies are unique. Deviations from truthful behavior are persistently observed when this mechanism is operated, but mainly for information structures for which agents' types are common information \citep{Andreoni-Che-Kim-2007-GEB}.

The remainder of the paper is organized as follows. Section~\ref{Sec:literature} places our contribution in the context of the literature. Section~\ref{Sec:example} presents the intuition of our results illustrated for the Top Trading Cycles (TTC) mechanism and the second-price auction, two cornerstones of the market design literature. Section~\ref{Sec:model} introduces the model. Section~\ref{Sec:Results} presents our main results. Section~\ref{Sec:expevidence} contrasts our results with experimental and empirical evidence. Section~\ref{Sec:robustimp} contrasts them with the characterizations of robust full implementation \citep{Bergemann-Morris-2011-GEB,Saijo-et-al-2007-TE,Adachi-2014-GEB} with which one can draw an informative parallel, and discusses our restriction to direct revelation mechanisms. Section~\ref{Sec:conclusion} concludes. The Appendix collects all proofs.

\section{Related literature}\label{Sec:literature}

\subsection{Refinements}

The idea to refine Nash equilibrium by means of the proximity to plausible behavior has some precedents in the literature. \citet{Harsanyi-1973-IJGT} addressed the plausibility of Nash equilibrium itself by approximating in each game at least one Nash equilibrium by means of behavior that is unambiguosly determined by utility maximization in additive randomly perturbed payoff models with vanishing perturbations. The main difference with our construction is that the theory in which we base approximation is non-parametric and disciplined by an a priori restriction that allows us to narrow the set of equilibria that can be approximated.\footnote{Each Nash equilibrium can be approached by a sequence of behavior in \citet{Harsanyi-1973-IJGT}'s randomly perturbed payoff models with vanishing perturbations \citep{Velez-Brown-2018-EE}.} Our refinement is also closely related to \citet{Rosenthal-1989-IJGT}'s approximation of equilibria by a particular linear random choice model that evolves towards best responses and is defined only in games with two actions,  \citet{VanDamme-1991-Springer}'s firm equilibria and vanishing control costs approachable equilibria, and \cite{Mackelvey-Palfrey-1996-JER}'s logistic QRE approachable equilibria. These authors propose parametric theories to account for deviations from utility maximization in games and require equilibria to be approachable by the empirical content of these theories. Behavior generated by each of these theories satisfies weak payoff monotonicity. Thus they generate subrefinements of empirical equilibrium.

These previous attempts to refine Nash equilibrium by means of approachability were never studied as stand alone refinements, however. Indeed, \citet{VanDamme-1991-Springer} developed firm equilibria and vanishing control costs approachable equilibria as basic frameworks to add restrictions and provide foundations for other equilibrium refinements that do eliminate weakly dominated behavior, e.g., \citet{Selten-1975-IJGT}'s perfect equilibria. Moreover, \citet{Mackelvey-Palfrey-1996-JER} only mention logistic QRE approachable equilibria as a theoretical possibility. It has never been studied or used in any application. Thus, a significant contribution of our work is to show that a robust non-parametric generalization of these refinements can actually inform us about the incentives for truthful revelation in dominant strategy games and explain well established regularities in data.

Determining whether empirical equilibrium coincides with any of the refinements defined by means of approachability in previous literature is interesting, but outside the scope of this paper. We have pursued this task in a companion paper \citep{Velez-Brown-2018-EE}. Remarkably, there is a meaningful difference between every possible refinement based on monotone additive randomly perturbed payoff models (e.g., firm equilibria, logistic QRE approachable equilibria) and empirical equilibrium. In particular, for each action space in which at least one agent has at least three actions, one can construct a payoff matrix for which each of these refinements selects a strict subset of empirical equilibria. Empirical equilibrium coincides with those that can be approached by  behavior in a general form of vanishing control costs games. It is not clear whether the approximation can always be done by means of the parametric form of approximation by behavior in control costs games proposed by \citet{VanDamme-1991-Springer}.

\subsection{Strategy-proofness}
The literature on strategy-proof mechanisms was initiated by \citet{Gibbard-1973-Eca} and \citet{SATTERTHWAITE-1975-JET} who proved that this property implies dictatorship when there are at least three outcomes and preferences are unrestricted. The theoretical literature that followed has shown that this property is also restrictive in economic environments, but can be achieved by reasonable scfs in restricted preference  domains \cite[see][for a survey]{Barbera-2010}. Among these are the VCG mechanisms for the choice of an outcome with transferable utility, which include the second-price auction of an object and the Pivotal mechanism for the selection of a public project \citep[see][and references therein]{Green-Laffont-1977}; the TTC mechanism for the reallocation of indivisible goods \citep{Shapley-Scarf-1974-JMathE}; the Student Proposing Deferred Acceptance (SPDA) mechanism for the allocation of school seats based on priorities \citep{Gale-Shapley-1962,Abdulkadiroglu-Sonmez-2003-AER}; the median voting rules for the selection of an outcome in an interval with satiable preferences \citep{Moulin-1980-PC}; and the Uniform rule in the rationing of a good with satiable preferences \citep{Benassy-1982,Sprumont-1991}. Even though \citet{Gibbard-1973-Eca} is not convinced about the positive content of dominant strategy equilibrium, the theoretical literature that followed endorsed the view that strategy-proofness was providing a bulletproof form of implementation.  Thus, when economics experiments were developed and gained popularity in the 1980s, the dominant strategy hypothesis became the center of attention of the experimental studies of strategy-proof mechanisms. Until recently the accepted wisdom was that behavior in a game with dominant strategies should be evaluated with respect to the benchmark of the dominant strategies hypothesis. The common finding in these experimental studies is a lack of support for this hypothesis (Sec.~\ref{Sec:expevidence}).\footnote{In a recently circulated paper, \citet{Masuda-et-al-2019} present evidence that the rate of truthful reports in a second-price auction increases when agents are directly advised about the dominance strategy property of these reports (from 20\% to 47\% in their experiment).}

Economic theorists have reacted to the findings in laboratory experiments.  The first attempt was made by \cite{Saijo-et-al-2007-TE} who looked for scfs that are implementable simultaneously in dominant strategies and Nash equilibrium in complete information environments. This form of implementation has a robustness property under multiple forms of incomplete information \citep{Saijo-et-al-2007-TE,Adachi-2014-GEB}. More recently, \citet{Li-AER-17} identified a property of some extensive-form dominant strategy mechanisms that simplifies the problem of identifying a dominant strategy in these games. Experiments support that the additional requirements of both \citet{Saijo-et-al-2007-TE} and \citet{Li-AER-17} improve the performance of mechanisms \citep{Cason-et-al-2006-GEB,Li-AER-17}.  Sparked by the work of \citet{Li-AER-17}, there has been a growing interest in identifying meaningful differences among dominant strategy mechanisms \citep{MacKenzie-2019,Pycia-Troyan-2019}, and in identifying alternative mechanisms that may perform better than direct revelation mechanisms of strategy proof scfs in experimental environments \citep{Bo-Hakimov-2019-EJ,Bo-Hakimov-2020}. Our results advance this research agenda in two fundamental ways. First, since our aim is exactly to understand the conditions in which suboptimal equilibria are plausible when a strategy-proof scf is operated, we are better able to rationalize the experimental and empirical evidence against these scfs. Indeed, this evidence has been collected mainly for mechanisms violating the additional requirements in these studies, and suboptimal equilibria is not invariably approximated when these stricter design requirements are not satisfied (see Sec.~\ref{Sec:expevidence}). Second, with interesting exceptions (e.g., \citealp{Arribillaga-et-al-2019}), the requirements identified by \cite{Saijo-et-al-2007-TE},  \citet{Li-AER-17}, and \citet{Pycia-Troyan-2019} are only satisfied by priority-like scfs in problems of interest that admit more symmetric strategy-proof scfs that satisfy our robust implementation requirements \citep[c.f.][]{Bochet-Sakai-2010-GEB,Fujinaka-Wakayama-2011-ET,ASHLAGI-GONCZ-2018-JET,Bade-Gonczarowski-2018,Troyan-2019-IER}. Thus our study produces a useful categorization of a considerably larger class of strategy-proof mechanisms.

Since its inception in private consumption environments by \citet{Satterthwaite-Sonn-1981-RES}, non-bossiness has played a role in social choice literature. Essentially,  this property fills a gap between axioms of individual and collective behavior \citep[e.g.,][]{Barbera-et-al-2018-AER}.\footnote{See \citet{Thomson-2016-SCW} for a survey of the definition and the normative content of the different notions of non-bossiness that have been used in the social choice literature. The particular form of non-bossiness that emerges endogenously from our characterization has played a role in at least two previous studies. \cite{Bochet-Tummenassan-2017} find that non-bossiness in welfare-outcome is a necessary and sufficient condition for a strategy-proof game to have only truthful equivalent equilibria in complete information environments when truthful behavior is focal. \citet{Schummer-Velez-2019} show that non-bossiness in welfare-outcome is sufficient for a deterministic sequential direct revelation game associated with a strategy-proof scf to implement the scf itself in sequential equilibria for almost every prior.} Our work differers from the social choice literature in that the, so to speak, left side of our characterizations, is a requirement on mechanisms based on a testable property of behavior, not on a property that one argues in favor of based on its normative content. In this sense our results are the first to establish a link between a non-bossiness condition and the empirical content of the Nash equilibrium prediction for the direct revelation mechanism of a strategy-proof scf.

Strategy-proof mechanisms have been operated for some time in the field. Empirical studies of such mechanisms have generally corraborated the observations from laboratory experiments \citep[e.g.][]{Hassidim-et-al-2016,REESJONES-2017-GEB,Artemov-Che-He-2017,Chen-Pereyra-2018,Shorrer-Sovago-2019}, in such high stakes environments as career choice \citep{Roth-1984-JPE} and school choice \citep{Abdulkadiroglu-Sonmez-2003-AER}. Among these papers, \citet{Artemov-Che-He-2017} and \citet{Chen-Pereyra-2018}, are the closest to ours. Besides presenting empirical evidence of persistent violations of the dominant strategies hypothesis, they propose theoretical explanations for it. They restrict to school choice environments in which a particular mechanism is used. \citet{Artemov-Che-He-2017} study a continuum model in which the SPDA mechanism is operated in a full-support incomplete information environment. They conclude that it is reasonable that one can observe equilibria in which agents make inconsequential mistakes. Their construction is based on the approximation of the continuum economy by means of finite realizations of it in which agents are allowed to make mistakes that vanish as the population grows. \citet{Chen-Pereyra-2018} study a finite school choice environment in which there is a unique ranking of students across all schools. They argue that only when information is not interior an agent can be expected to deviate from her truthful report based on the analysis of an ordinal form of equilibrium. Our study substantially differs in its scope with these two papers, because our results apply to all private values environments that admit a strategy-proof scf. When applied to a school choice problem, our results are qualitatively in line with those in these two studies and thus provide a rational for their empirical findings. However, our results additionally explain the causes of behavior in these environments (informational assumptions and specific properties of the mechanisms) and provide exact guidelines of when these phenomena will be present in any other environment that accepts a strategy-proof mechanism.

Our work can be related with a growing literature on behavioral mechanism design, which aims to inform the design of mechanisms with regularities observed in laboratory experiments and empirical data. These papers can be classified in two different approaches. First, \citet{CABRALES2000247}, \citet{HEALY-2006-JET}, and \citet{TUMENNASAN-2013-GEB} study the performance of mechanisms for solutions concepts defined by a convergence process.   They identify properties of mechanisms that guarantee their convergence to desired allocations under certain dynamics. These conditions turn out to be strong. Indeed, they are violated by all the strategy-proof mechanisms we have mentioned above.   The second approach in this literature is to analyze the design of mechanisms accounting for behavior that is not utility maximizing for specific alternative behavior models \citep[c.f.,][]{Eliaz-2002-RStud,de-Clippel-2014-AER,de-Clippel-et-al-2017-levelk,Kneeland-2017-levelk}. Our work bridges these two approaches. It informs us about the performance of mechanisms when behavior approximates mutual best responses by some unrestricted process and at the same time satisfies a weak and testable form of rationality.

Finally, empirical equilibrium analysis produces policy relevant comparative statics for mechanisms that do not have dominant strategies. In \citet{Velez-Brown-2018-EI} we apply this methodolgy to the analysis of partnership dissolution auctions and obtain general results for the problem of full implementation. In \citet{Brown-Velez-2017-OAU} we experimentally test the comparative statics predicted by empirical equilibrium in partnership dissolution auctions.

\section{The intuition: empirical plausibility of equilibria of TTC and second-price auction}\label{Sec:example}

Two mechanisms illustrate our main findings. The first is TTC for the reallocation of indivisible goods from individual endowments \citep{Shapley-Scarf-1974-JMathE}. The second is the popular second-price auction. For simplicity, let us consider two-agent stylized versions of these market design environments.

Suppose that two agents, say $\{A,B\}$, are to potentially trade the houses they own when each agent has strict preferences. TTC is the mechanism that operates as follows. Each agent is asked to point to the house that he or she prefers. Then, they trade if each agent points to the other agent's house and remain in their houses otherwise. It is well known that this mechanism is strategy-proof. That is, it is a dominant strategy for each agent to point to her preferred house.  Thus, if one predicts that truthful dominant strategies will result when this mechanism is operated, one would obtain an efficient trade. There are more Nash equilibria of the game that ensues when this mechanism is operated. Consider the strategy profile where each agent unconditionally points to his or her own house, regardless of information structure. This profile of strategies provides mutual best responses for expected utility maximizing agents, but does not necessarily produce the same outcomes as the truthful profile.

The second-price auction is a mechanism for the allocation of a good by a seller among some buyers. We suppose that there are two buyers $\{A,B\}$ who may have a type $\theta_i\in \{L,M,H\}$. The value that an agent assigns to the object depends on her type: $v_L=0$, $v_M=1/2$, and $v_H=1$. Each agent has quasi-linear preferences, i.e., assigns zero utility to receiving no object, and $v_{\theta_i}-x_i$ to receiving the object and paying $x_i$ for it. In the second-price auction each agent reports his or her value for the object. Then an agent with higher valuation receives the object and pays the seller the valuation of the other agent. Ties are decided uniformly at random. It is well known that this mechanism is also strategy-proof. In its truthful dominant strategy equilibrium it obtains an efficient assignment of the object, i.e., an agent with higher value receives the object. Moreover, the revenue of the seller is the second highest valuation. There are more Nash equilibria of the game that ensue when this mechanism is operated.  In order to exhibit such equilibria let us suppose that agent~$A$ has type~$M$, agent~$B$ has type~$H$, and both agents have complete information of their types. Table~\ref{Tab:Game-sndprice} presents the normal form of the complete information game that ensues. There are infinitely many Nash equilibria of this game. For instance, agent $B$ reports her true type and agent $A$ randomizes in some arbitrary way between $L$ and $M$. In these equilibria, the seller generically obtains  lower revenue than in the truthful equilibrium.
  \begin{table}
  \centering
  \begin{tabular}{c|c|c|c|c|}
    \multicolumn{2}{c}{}&\multicolumn{3}{c}{Agent $B$}\\
    \cline{3-5}
     \multicolumn{2}{c|}{}& H & M & L \\\cline{2-5}
    \multirow{3}{4em}{Agent $A$} &H & -1/4,0 & 0,0 & 1/2,0 \\\cline{2-5}
    &M & 0,1/2 & 0,1/4 & 1/2,0 \\\cline{2-5}
    &L & 0,1 & 0,1 & 1/4,1/2 \\\cline{2-5}
  \end{tabular}
    \caption{Normal form of second-price auction with complete information when $\theta_A=M$ and $\theta_B=H$.}\label{Tab:Game-sndprice}
  \end{table}

Our quest is then to determine which, if any, of the sub-optimal equilibria of TTC, the second-price auction, and for that matter any strategy-proof mechanism, should concern a social planner who operates one of these mechanisms. In order to do so we calculate the empirical equilibria of the games induced by the operation of these mechanisms. It turns out that the Nash equilibria of the TTC and the second-price auction have a very different nature. No sub-optimal Nash equilibrium of the TTC game is an empirical equilibrium. By contrast, for some information structures, the second-price auction has empirical equilibria whose outcomes differ from those of the truthful ones. This is surprising. The sub-optimal equilibria of the TTC that we exhibit are prior free, i.e., they are strategy profiles that constitute equilibria independently of the information structure. However, as our analysis unveils, this property turns out to be unrelated with the empirical plausibility of equilibria.

To build more intuition into the empirical equilibrium refinement, consider first the TTC game in a complete information environment in which both agents prefer to trade. In any profile of strategies satisfying weak payoff monotonicity, each agent points to the other agent with probability at least one half. To see this just suppose the opposite, i.e., an agent points to herself with probability greater than one half. Then, this would reveal that pointing to herself has expected utility greater than pointing to the other agent. Since pointing to the other agent is a weakly dominant strategy, this cannot happen. From the two equilibria of this game (both point to each other, or both point to themselves),  only each agent pointing to the other with probability one is a empirical equilibrium. Both agents pointing to themselves cannot be approached by weakly monotone behavior (Fig.~\ref{EE-graph1}). That is, if behavior can be fit satisfactorily by a model satisfying weak payoff monotonicity, the only equilibrium that has the chance to be approximated by empirical distributions is the efficient equilibrium.

\begin{figure}[t]
\centering
\includegraphics[width = 0.6\textwidth]{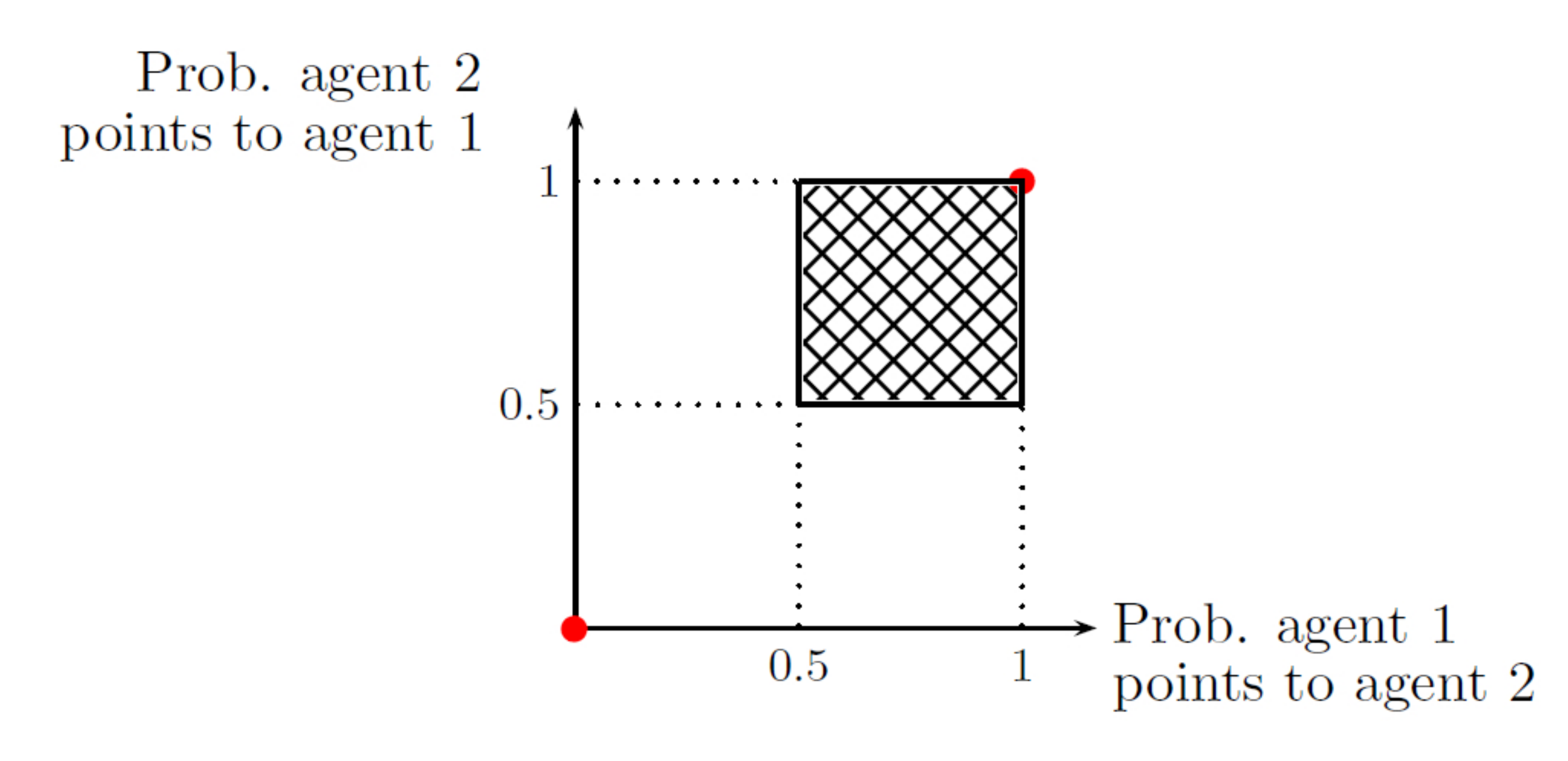}
\caption{\label{EE-graph1} Weakly payoff monotone profiles (shaded area) and empirical equilibria in TTC game with complete information when trade is efficient. Agents trade with probability one in the unique empirical equilibrium: the only Nash equilibrium that is in the closure of weakly payoff monotone behavior.}
\end{figure}

Our argument does not depend on the assumption of complete information. Consider the TTC game when information is summarized by a common prior.\footnote{This can be relaxed to some extent. See Sec.~\ref{Sec:model}.}  Since revealing her true preference is dominant, each agent with each possible type will reveal her preferences with probability at least $1/2$ in each weakly payoff monotone profile. Thus, in any limit of a sequence of weakly payoff monotone strategies, each agent reveals her true preference with probability at least $1/2$. Consequently, in each empirical equilibrium there is a lower bound on the probability with which each agent is truthful. Given the realization of agents' types, each agent always believes the true payoff type of the other agent is possible. Then, in each empirical equilibrium of the TTC, whenever trade is efficient (for the true types of the agents), each agent will place positive probability on the other agent pointing to her. Consequently, in each empirical equilibrium of the TTC, given that an agent prefers to trade, this agent will point to the other agent with probability \emph{one} whenever efficient trade is possible. Thus, each empirical equilibrium of the TTC obtains the truthful outcome with certainty.

For the second-price auction consider the complete information structure whose associated normal form game is presented in Table~\ref{Tab:Game-sndprice}. Let $\varepsilon>0$ and $\sigma\equiv(\sigma_A,\sigma_B)$ be the pair of probability distributions on each agent's action space defined as follows. Agent $A$ places $\varepsilon$ probability on $H$, $1/2$ on $M$ and $1/2-\varepsilon$ on $L$. Agent $B$ places $1-2\varepsilon$ probability on $H$, and $\varepsilon$ on each of the other two actions. For small $\varepsilon$ this profile is weakly payoff monotone. This can be easily seen in Fig.~\ref{Fig:EE-SP}. Starting at the top left of the figure is $\sigma_B$. This distribution induces the expected payoffs for agent $A$ shown at the top right of the figure. Since $M$ is the unique weakly dominant strategy for agent $A$ and $\sigma_B$ is interior, the highest payoff for agent $A$ is achieved by $M$. Between $L$ and $H$, agent $A$ obtains a higher payoff with $L$, because with $H$ she ends up buying the object for a price above her value with positive probability. Thus, expected payoffs for agent $A$ are ordered exactly as $\sigma_A$, which is shown at the bottom right of Fig.~\ref{Fig:EE-SP}. If agent $A$ plays $\sigma_A$, agent~$B$'s expected utility is that shown at the bottom left of  Fig.~\ref{Fig:EE-SP}. Agent $B$'s utility is maximized at her unique weakly dominant strategy. Thus, differences in $\sigma_B$ reveal differences in expected utility. More precisely, agent $B$ playing $H$ with higher probability than both $M$ and $L$ is consistent with $H$ having higher expected payoff than both $M$ and $L$. As $\varepsilon$ vanishes, this profile converges to a Nash equilibrium in which agent $A$ randomizes between $M$ and $L$ with equal probability.\footnote{Note that we can easily modify our construction to have interior profiles that are ordinally equivalent to expected payoffs. This is always possible. That is, each empirical equilibrium of a finite game is the limit of interior distributions that are, agent-wise, ordinally equivalent to expected payoffs \citep{Velez-Brown-2018-EE}. Note also that we could easily modify our construction so agent $A$ places probability $1/2+\alpha$ on $M$ and $1/2-\alpha$ on $L$ for any $0\leq \alpha\leq1/2$.}

\begin{figure}[t]
\centering
\includegraphics[width = 0.6\textwidth]{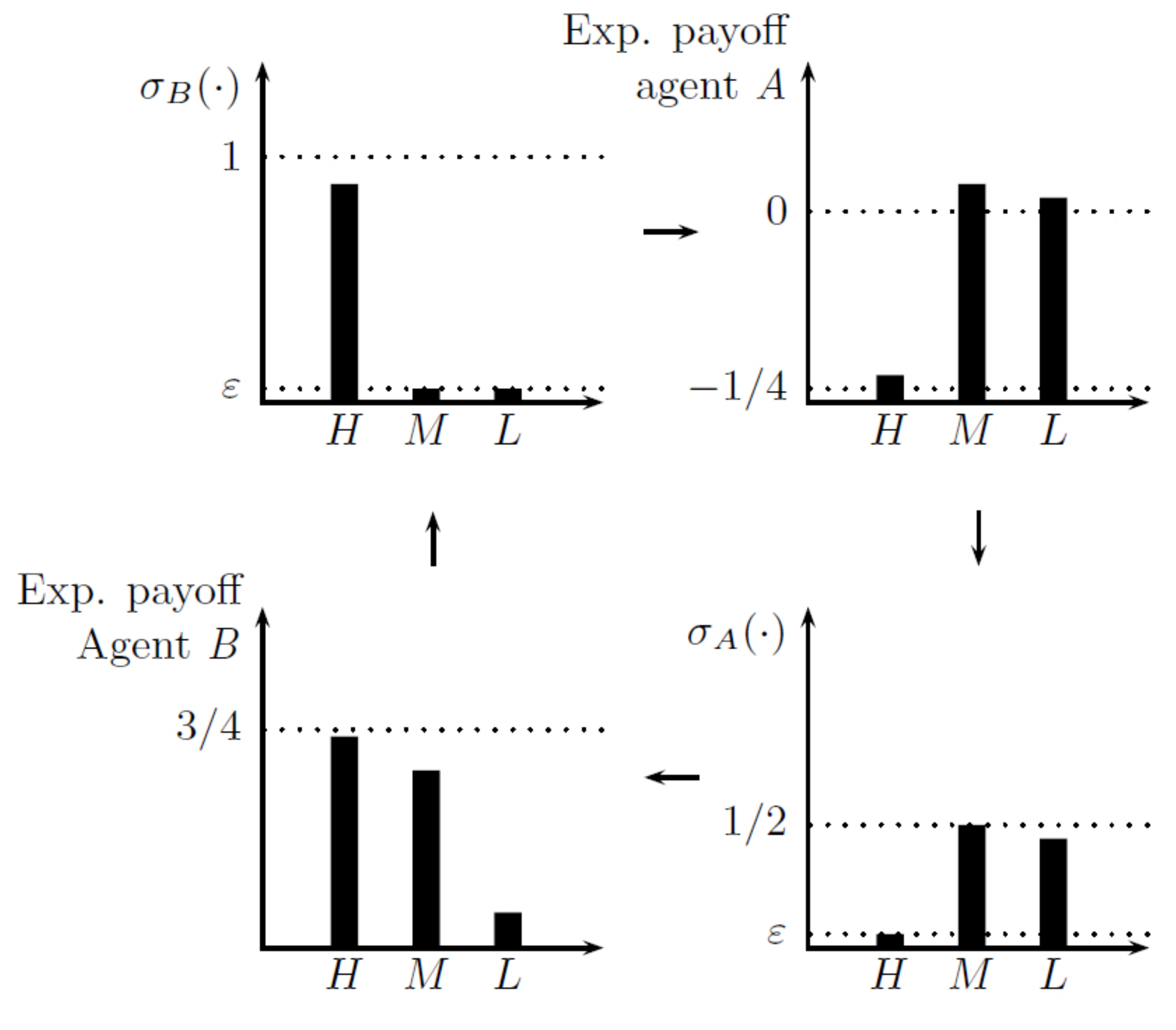}
\caption{\label{Fig:EE-SP} Weakly monotone distribution in the second price auction with complete information for agents $M$ and $H$. As $\varepsilon$ vanishes the distributions converge to a Nash equilibrium in which the lower value agent randomizes between $L$ and $M$.}
\end{figure}

Empirical equilibrium allows us to draw a clear difference between TTC and the second-price auction. Suppose that agents' behavior is weakly payoff monotone. Then, if these mechanisms are operated, one will never observe that empirical distributions of play in TTC approximate an equilibrium producing a sub-optimal outcome. By contrast, this possibility is not ruled out for the second-price auction.

It turns out that these differences among these two mechanisms can be pinned down to a property that TTC satisfies and the second-price auction violates: non-bossiness in welfare-outcome, i.e., in the direct revelation game of the mechanism, an agent cannot change the outcome without changing her welfare (Theorem~\ref{Th:str-pr}).

For the strategy-proof mechanisms that do violate non-bossiness, it is useful to examine which information structures produce undesirable empirical equilibria. It turns out that for a strategy-proof mechanism with essentially unique dominant strategies, like the second-price auction, this cannot happen for information structures with full support (Theorem~\ref{Thm:Interior}). Thus, in a sense, our example above with the second-price auction actually requires the type of information structure we used.

Together, Theorems~\ref{Th:str-pr} and~\ref{Thm:Interior} produce sharp predictions about the type of behavior that is plausible when a strategy-proof scf is operated in different information structures. In Sec.~\ref{Sec:expevidence} we review the relevant experimental and empirical literature and find that these predictions are consistent with it.

\section{Model}\label{Sec:model}
A group of agents $N\equiv\{1,\dots,n\}$ is to select an alternative in an arbitrary set $X$. Agents have private values, i.e., each $i\in N$ has a payoff type~$\theta_i$, determining an expected utility index $u_i(\cdot|\theta_i):X\rightarrow\R$. The set of possible payoff types for agent $i$ is $\Theta_i$ and the set of possible payoff type profiles is $\Theta\equiv \prod_{i\in N}\Theta_i$. We assume that $\Theta$ is finite. For each $S\subseteq N$, $\Theta_S$ is the cartesian product of the type spaces of the agents in $S$. The generic element of $\Theta_S$ is $\theta_S$. When $S=N\setminus\{i\}$ we simply write $\Theta_{-i}$  and $\theta_{-i}$.  Consistently, whenever convenient, we concatenate partial profiles, as in $(\theta_{-i},\mu_i)$. We use this notation consistently when operating with vectors (as in strategy profiles). We assume that information is summarized by a common prior $p\in\Delta(\Theta)$.\footnote{For a finite set $F$, $\Delta(F)$ denotes the simplex of probability measures on $F$.} For each $\theta$ in the support of $p$ and each $i\in N$, let $p(\cdot|\theta_i)$ be the distribution $p$ conditional on agent $i$ drawing type $\theta_i$.\footnote{Our results can be extended for general type spaces \`a la \citet{Bergemann-Morris-2005-Eca} when one requires the type of robust implementation in our theorems only for the common support of the priors. We prefer to present our payoff-type model for two reasons. First, it is much simpler and intuitive. Second, since our theorems are robust implementation characterizations, they are not stronger results when stated for larger sets of priors. By stating our theorems in our domain, the reader is sure that we do not make use of the additional freedoms that games with non-common priors allow.}

A social choice function (scf) selects a set of alternatives for each possible state. The generic scf is $g:\Theta\rightarrow X$. Three properties of scfs play an important role in our results. An scf $g$,

\begin{enumerate}
  \item is \textit{strategy-proof} if for each $\theta\in\Theta$, each $i\in N$, and each $\tau_i\in\Theta_i$, $u_i(g(\theta)|\theta_i)\geq u_i(g(\theta_{-i},\tau_{i})|\theta_i)$.
  \item is \textit{non-bossy in welfare-outcome} if for each $\theta\in\Theta$, each $i\in N$, and each $\tau_i\in\Theta_i$, $u_i(g(\theta)|\theta_i)=u_i(g(\theta_{-i},\tau_{i})|\theta_i)$ implies that $g(\theta)=g(\theta_{-i},\tau_{i})$.
  \item has \textit{essentially unique dominant strategies} if for each $\theta\in\Theta$, each $i\in N$, and each $\tau_i\in\Theta_i$, if $u_i(g(\theta)|\theta_i)=u_i(g(\theta_{-i},\tau_{i})|\theta_i)$ and  $g(\theta)\neq g(\theta_{-i},\tau_{i})$, then there is $\tau_{-i}\in\Theta_{-i}$ such that $u_i(g(\tau_{-i},\theta_i)|\theta_i)>u_i(g(\tau)|\theta_i)$.
\end{enumerate}

The first property is well-known. The second property requires that no agent, when telling the truth (in the direct revelation mechanism associated with the scf), be able to change the outcome by changing her report without changing her welfare. It is satisfied, among other, by TTC, the Median Voting rule, and the Uniform rule. It is a strengthening of the non-bossiness condition of \citet{Satterthwaite-Sonn-1981-RES}, which applies only to environments with private consumption. Non-bossiness in welfare-outcome is violated by the Pivotal mechanism, the second-price auction, and SPDA. The third property requires that any consequential deviation from a truthful report by an agent, can have adverse consequences for her. Restricted to strategy-proof scfs, this property says that, in the direct revelation game associated with the scf, for each agent, all dominant strategies are redundant. This is satisfied whenever true reports are the unique dominant strategies, as in the second-price auction. It is not necessary that dominant strategies be unique for this property to be satisfied. A student in a school choice environment with strict preferences, and in which SPDA is operated, may have multiple dominant strategies (think for instance of a student who is at the top of the ranking of each school). However, any misreport that is also  a dominant strategy for this student, cannot change the outcome (see Online Appendix for a formal argument).

A mechanism is a pair $(M,\varphi)$ where $M\equiv(M_i)_{i\in N}$ is an unrestricted message space and $\varphi:M\rightarrow \Delta(X)$ is an outcome function. A finite mechanism is that for which each $M_i$ is a finite set. Given the common prior $p$, $(M,\varphi)$ determines a standard Bayesian game $\Gamma\equiv(M,\varphi,p)$. When the prior is degenerate, i.e., places probability one in a payoff type $\theta\in\Theta$, we refer to this as a game of complete information and denote it simply by $(M,\varphi,\theta)$.  A (behavior) strategy for agent~$i$ in $\Gamma$ is a function that assigns to each  $\theta_i\in \Theta_i$ that happens with positive probability under $p$, a function $\sigma_i(\cdot|\theta_i)\in\Delta(M_i)$.\footnote{All of our results refer to finite mechanisms. Thus, we avoid any formalism to account for strategies on infinite sets.}  We denote a profile of strategies by $\sigma\equiv(\sigma_i)_{i\in N}$. For each $S\subseteq N$, and each $\theta_S\in \Theta_S$,  $\sigma_S(\cdot|\theta_S)$  is the corresponding product measure $\prod_{i\in S}\sigma_i(\cdot|\theta_i)$. When $S=N$ we simply write $\sigma(\cdot|\theta)$.  We denote the measure that places probability one on $m_i\in M_i$ by~$\delta_{m_i}$. With a complete information structure we simplify notation and do not condition strategies on an agent's type, which is uniquely determined by the prior.  Thus, in game $(M,\varphi,\theta)$ we write $\sigma_i$ instead of $\sigma_i(\cdot|\theta_i)$.

Let $\theta_i\in\Theta_i$ be realized with positive probability under $p$. The expected utility of agent~$i$ with type $\theta_i$, in $\Gamma$ from playing strategy $\mu_i$ when the other agents select actions as prescribed by $\sigma_{-i}$ is
\[U_\varphi(\sigma_{-i},\mu_{i}|p,\theta_i)\equiv \sum u(\varphi(m)|\theta_i)p(\theta_{-i}|\theta_i)\sigma_{-i}(m_{-i}|\theta_{-i})\mu_i(m_i|\theta_i),\]
where the summation is over all $\theta_{-i}\in \theta_{-i}$ and $m\in M$. A profile of strategies $\sigma$  is a \textit{Bayesian Nash equilibrium} of $\Gamma$ if for each $\theta\in\Theta$ in the support of $p$, each $i\in N$, and each $\mu_i\in \Delta(M_i)$, $U_\varphi(\sigma_{-i},\mu_i|p,\theta_i)\leq U_\varphi(\sigma_{-i},\sigma_i|p,\theta_i)$. The set of Bayesian Nash equilibria of $\Gamma$ is $N(\Gamma)$. We say that $m_i\in M_i$ is a weakly dominant action for agent $i$ with type $\theta_i\in \Theta_i$ in $(M,\varphi)$ if for each $r_i\in M_i$, and each $m_{-i}\in M_{-i}$, $u_i(m|\theta_i)\geq u_i(m_{-i},r_i|\theta_i)$.

Our main basis for empirical plausibility of behavior is the following weak form of rationality.

\begin{definition}\rm A profile of strategies for $\Gamma\equiv(M,\varphi,p)$, $\sigma\equiv(\sigma_i)_{i\in N}$, is \textit{weakly payoff monotone for $\Gamma$} if for each $\theta\in\Theta$ in the support of $p$, each $i\in N$, and each pair $\{m_i,n_i\}\subseteq M_i$ such that $\sigma_i(m_i|\theta_i)>\sigma_i(n_i|\theta_i)$, $U_\varphi(\sigma_{-i},\delta_{m_i}|p,\theta_i)> U_\varphi(\sigma_{-i},\delta_{n_i}|p,\theta_i)$.\end{definition}

We then identify the Nash equilibria that can be approximated by empirically plausible behavior.

\begin{definition}\rm An \textit{empirical equilibrium} of $\Gamma\equiv(M,\varphi,p)$ is a Bayesian Nash equilibrium of $\Gamma$ that is the limit of a sequence of weakly payoff monotone distributions for $\Gamma$.
\end{definition}

In any finite game, proper equilibria \citep{Myerson-1978-IJGT}, firm equilibria and approachable equilibria \citep{VanDamme-1991-Springer},  and the limiting logistic equilibrium \citep{mckelvey:95geb}  are empirical equilibria. Thus, existence of empirical equilibrium holds for each finite game.

\section{Results}\label{Sec:Results}

We start with a key lemma stating that, when available, weakly dominant actions will always be part of the support of each empirical equilibrium in a game.

\begin{lemma}\rm\label{Lem:interior}Let $(M,\varphi)$ be a mechanism and $p$ a common prior. Let $i\in N$ and $\theta_i\in \Theta_i$. Suppose that $m_i\in M_i$ is a weakly dominant action for agent~$i$ with type $\theta_i$ in $(M,\varphi)$. Let $\sigma$ be an empirical equilibrium of $(M,\varphi,p)$. Then, $m_i$ is in the support of $\sigma_i(\cdot|\theta_i)$.
\end{lemma}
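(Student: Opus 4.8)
The plan is to argue by contradiction, exploiting the defining property of an empirical equilibrium (approximability by weakly payoff monotone distributions) together with the strictness of the dominance inequality that $m_i$ enjoys against \emph{any} alternative action, once the opponents' behavior puts positive probability on every action profile. So suppose $\sigma$ is an empirical equilibrium of $(M,\varphi,p)$ but $m_i$ is \emph{not} in the support of $\sigma_i(\cdot|\theta_i)$, i.e. $\sigma_i(m_i|\theta_i)=0$. Pick a sequence $(\sigma^k)_{k\in\N}$ of weakly payoff monotone distributions for $(M,\varphi,p)$ with $\sigma^k\to\sigma$. Since each $\sigma^k$ is a full-support perturbation only in the sense that it is a probability distribution, the first thing I would do is note that $\sigma^k\to\sigma$ forces $\sigma^k_i(m_i|\theta_i)\to 0$; the goal is to contradict this by showing $m_i$ must be played with frequency bounded below.

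First I would establish the key inequality: for each $k$ and each $r_i\in M_i\setminus\{m_i\}$,
\[
U_\varphi(\sigma^k_{-i},\delta_{m_i}\,|\,p,\theta_i)\ \geq\ U_\varphi(\sigma^k_{-i},\delta_{r_i}\,|\,p,\theta_i).
\]
This is immediate from the definition of weak dominance of $m_i$: the inequality $u_i(m|\theta_i)\geq u_i(m_{-i},r_i|\theta_i)$ holds pointwise in $m_{-i}$, so it survives taking the expectation over $\theta_{-i}$ (with weights $p(\theta_{-i}|\theta_i)$) and over $m_{-i}$ (with weights $\sigma^k_{-i}(m_{-i}|\theta_{-i})$) — note $\theta_i$ occurs with positive probability under $p$ so the conditioning is well defined. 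Hence $m_i$ is, for every $k$, an action of \emph{weakly maximal} expected payoff for type $\theta_i$ against $\sigma^k_{-i}$.

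Now I would invoke weak payoff monotonicity of $\sigma^k$ in the contrapositive direction. Weak payoff monotonicity says: if $\sigma^k_i(n_i|\theta_i)>\sigma^k_i(p_i|\theta_i)$ then $U_\varphi(\sigma^k_{-i},\delta_{n_i}|p,\theta_i)>U_\varphi(\sigma^k_{-i},\delta_{p_i}|p,\theta_i)$. Taking the contrapositive, $U_\varphi(\sigma^k_{-i},\delta_{m_i}|p,\theta_i)\geq U_\varphi(\sigma^k_{-i},\delta_{r_i}|p,\theta_i)$ implies $\sigma^k_i(m_i|\theta_i)\geq \sigma^k_i(r_i|\theta_i)$. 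Applying this for every $r_i\in M_i\setminus\{m_i\}$, we get that $m_i$ is played at least as frequently as every other action under $\sigma^k_i(\cdot|\theta_i)$; since these probabilities sum to one and there are $|M_i|$ of them, $\sigma^k_i(m_i|\theta_i)\geq 1/|M_i|$ for every $k$. Passing to the limit, $\sigma_i(m_i|\theta_i)\geq 1/|M_i|>0$, contradicting $\sigma_i(m_i|\theta_i)=0$. Therefore $m_i$ lies in the support of $\sigma_i(\cdot|\theta_i)$.

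The only genuinely delicate point — and the step I would be most careful about — is the direction in which weak payoff monotonicity is used: it is stated as "difference in frequency $\Rightarrow$ difference in payoff," and I need its contrapositive "weakly higher payoff $\Rightarrow$ weakly higher frequency," which is logically equivalent but must be deployed in the right way against the right comparison action. A secondary subtlety is that weak payoff monotonicity compares the payoffs of \emph{pure deviations} $\delta_{m_i}$ against the \emph{profile} $\sigma^k_{-i}$, which is exactly the object in the weak-dominance inequality after integrating against $p(\cdot|\theta_i)$ and $\sigma^k_{-i}$ — so no mismatch arises, but I would spell out that integration explicitly. Everything else is routine limiting argument (closedness of the weak inequality $\sigma_i(m_i|\theta_i)\geq 1/|M_i|$ under $\sigma^k\to\sigma$).
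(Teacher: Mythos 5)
Your proof is correct and follows essentially the same route as the paper's: both derive $U_\varphi(\sigma^k_{-i},\delta_{m_i}|p,\theta_i)\geq U_\varphi(\sigma^k_{-i},\delta_{r_i}|p,\theta_i)$ from pointwise weak dominance, apply the contrapositive of weak payoff monotonicity to get $\sigma^k_i(m_i|\theta_i)\geq\sigma^k_i(r_i|\theta_i)$ for every $r_i$, and pass to the limit. Your explicit lower bound $\sigma^k_i(m_i|\theta_i)\geq 1/|M_i|$ and the contradiction framing are cosmetic variations on the paper's direct argument.
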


The following theorem characterizes the strategy-proof scfs for which the empirical equilibria of its revelation game produce with certainty, for each common prior information structure, the truthful outcome.

\begin{theorem}\rm\label{Th:str-pr}Let $g$ be an scf. The following statements are equivalent.
\begin{enumerate}
\item For each common prior $p$ and each empirical equilibrium of $(\Theta,g,p)$, say $\sigma$, we have that for each pair $\{\theta,\tau\}\subseteq \Theta$ where $\theta$ is in the support of $p$ and $\tau$ is in the support of $\sigma(\cdot|\theta)$, $g(\theta)=g(\tau)$.
\item $g$ is strategy-proof and non-bossy in welfare-outcome.
\end{enumerate}
\end{theorem}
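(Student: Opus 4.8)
The plan is to prove the two implications separately. The implication $(2)\Rightarrow(1)$ is the clean direction and I would present it in full; the converse $(1)\Rightarrow(2)$ I would establish contrapositively, by turning a violation of strategy-proofness or of non-bossiness in welfare-outcome into a common prior whose revelation game admits an empirical equilibrium with a non-truthful outcome.

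\textbf{Proof of $(2)\Rightarrow(1)$.} Assume $g$ is strategy-proof and non-bossy in welfare-outcome, fix a common prior $p$ and an empirical equilibrium $\sigma$ of $(\Theta,g,p)$, and take $\theta$ in the support of $p$ and $\tau$ in the support of $\sigma(\cdot|\theta)=\prod_{i}\sigma_i(\cdot|\theta_i)$. First, strategy-proofness says exactly that the truthful report $\theta_i$ is a weakly dominant action for type $\theta_i$ in $(\Theta,g)$, so Lemma~\ref{Lem:interior} puts $\theta_i$ in the support of $\sigma_i(\cdot|\theta_i)$ for every $i$. Second, fix $i$: since $\theta_i$ and $\tau_i$ are both in the support of $\sigma_i(\cdot|\theta_i)$ and $\sigma$ is a Bayesian Nash equilibrium, $U_g(\sigma_{-i},\delta_{\theta_i}|p,\theta_i)=U_g(\sigma_{-i},\delta_{\tau_i}|p,\theta_i)$; expanding and using that strategy-proofness makes each summand $u_i(g(m_{-i},\theta_i)|\theta_i)-u_i(g(m_{-i},\tau_i)|\theta_i)$ nonnegative, I conclude each positively weighted summand vanishes, i.e.\ $u_i(g(m_{-i},\theta_i)|\theta_i)=u_i(g(m_{-i},\tau_i)|\theta_i)$ whenever $p(\theta_{-i}|\theta_i)>0$ and $m_{-i}$ is in the support of $\sigma_{-i}(\cdot|\theta_{-i})$. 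Finally I would rewrite $\theta$ into $\tau$ one coordinate at a time: with $\nu^0=\theta$ and $\nu^k=(\tau_1,\dots,\tau_k,\theta_{k+1},\dots,\theta_n)$, the other agents play the same vector $m_{-k}$ in $\nu^{k-1}$ and $\nu^k$, and $m_{-k}$ is in the support of $\sigma_{-k}(\cdot|\theta_{-k})$ (its first $k-1$ entries because $\tau$ is in the support of $\sigma(\cdot|\theta)$, its last entries being truthful reports, by the first step), while $\theta_{-k}$ is in the support of $p(\cdot|\theta_k)$ and $\tau_k$ in the support of $\sigma_k(\cdot|\theta_k)$. The welfare identity of the previous step gives $u_k(g(\nu^{k-1})|\theta_k)=u_k(g(\nu^k)|\theta_k)$, and non-bossiness in welfare-outcome (at profile $\nu^{k-1}$, agent $k$, deviation $\tau_k$) yields $g(\nu^{k-1})=g(\nu^k)$; chaining over $k$ gives $g(\theta)=g(\tau)$.

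\textbf{Proof of $(1)\Rightarrow(2)$.} I would argue the contrapositive. If $g$ is not strategy-proof there are $\theta,i,\tau_i$ with $u_i(g(\theta_{-i},\tau_i)|\theta_i)>u_i(g(\theta)|\theta_i)$; if $g$ is strategy-proof but not non-bossy in welfare-outcome there are $\theta,i,\tau_i$ with $u_i(g(\theta_{-i},\tau_i)|\theta_i)=u_i(g(\theta)|\theta_i)$; in both cases $g(\theta_{-i},\tau_i)\neq g(\theta)$. Taking the degenerate prior at $\theta$, the goal is to exhibit a weakly payoff monotone profile of $(\Theta,g,\theta)$, or a convergent sequence of such profiles with Bayesian Nash limit, whose support at $\theta$ contains a report profile inducing an outcome other than $g(\theta)$. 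The candidate has agent $i$ randomize over her set $B_i$ of best responses to the others' (near-)truthful play — a set that, by the choice of witness, contains a report producing an outcome $\neq g(\theta)$ — while the other agents report (approximately) truthfully; weak payoff monotonicity for agent $i$ is immediate from the definition of $B_i$, and for the non-deviating agents strategy-proofness (when it holds) makes truthful reporting weakly dominant, which is the needed ingredient, together with Lemma~\ref{Lem:interior}.

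\textbf{Main obstacle.} The delicate point is precisely this last construction. A weakly payoff monotone profile need not place positive probability only on best responses, but it may never rank a lower-payoff action above a higher-payoff one, so welfare-equivalent deviations — exactly what bossiness creates — force probability mass to be spread across several actions, and the other agents' incentives then depend on how agent~$i$ spreads that mass. Pinning down, in full generality, a prior and an approximating sequence such that (i) the limit is a genuine Bayesian Nash equilibrium, (ii) every approximating profile is weakly payoff monotone despite ties, and (iii) the realized outcome at the true state is still not $g(\theta)$, is where the real work lies; the strategy-proofness-violation case is the most demanding, since there one cannot rely on weak dominance of truth to keep the non-deviating agents' behavior under control, and a more careful choice of the witness profile (or an auxiliary limiting argument) is needed.
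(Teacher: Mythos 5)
Your $(2)\Rightarrow(1)$ direction is correct and is essentially the paper's argument: Lemma~\ref{Lem:interior} puts the truthful report in the support, equilibrium plus strategy-proofness forces the pointwise welfare identity $u_i(g(m_{-i},\theta_i)|\theta_i)=u_i(g(m_{-i},\tau_i)|\theta_i)$ on the support, and non-bossiness lets you swap coordinates one at a time (the paper runs the chain from $\tau$ toward $\theta$ rather than from $\theta$ toward $\tau$, but that is immaterial).

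The converse is where the proposal has a genuine gap, and you have correctly located it yourself: the entire difficulty of the theorem is the construction you defer in your ``Main obstacle'' paragraph, and without it the proof is not done. Two specific points. First, you make the strategy-proofness half harder than it needs to be. Since empirical equilibria exist in every finite game, statement~1 already implies that for every common prior there is \emph{some} Bayesian Nash equilibrium producing the truthful outcome with probability one; applying this to the prior $\tfrac12\delta_\theta+\tfrac12\delta_{(\theta_{-i},\tau_i)}$ gives strategy-proofness by the standard argument of \citet{Dasguptaetal-1979-RES} and \citet{Bergemann-Morris-2005-Eca}. You never need to build a witness equilibrium for a strategy-proofness violation, so the case you flag as ``most demanding'' disappears. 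Second, for the non-bossiness half your candidate limit (agent $i$ uniform on her best responses to the others' truthful play, others truthful) is the right target --- it is the paper's $\sigma^*$ --- but exhibiting a weakly payoff monotone sequence converging to it is not a routine perturbation. Weak payoff monotonicity constrains the \emph{entire} ordinal ranking of each agent's action frequencies against expected payoffs, and agent $i$'s near-indifference between $\theta_i$ and $\tau_i$ must be made compatible with the other agents' perturbations at every point of the sequence. The paper resolves this with a two-parameter family of perturbed logistic quantal response operators $\kappa^{\varepsilon,r,\eta,\lambda}$, Brouwer fixed points, a calibration of $\eta(\varepsilon,r,\delta)$ so that agent $i$'s fixed-point probabilities on $\theta_i$ and $\tau_i$ differ by at most $\delta$, and an iterated limit first in $\lambda$ (or $t$) and then along $\varepsilon(r)=\delta(r)=1/r$. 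Some argument of this kind --- producing interior, weakly payoff monotone profiles with the required simultaneous control of all agents' rankings --- is indispensable, and the proposal does not supply it.
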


We now discuss the proof of Theorem~\ref{Th:str-pr}. Let us discuss first why a strategy-proof and non-bossy in welfare-outcome scf $g$ has the robustness property in statement~1 in the theorem. Suppose that $\sigma\in N(\Theta,g,p)$, that the true type of the agents is~$\theta$, and that the agents end up reporting~$\tau$ with positive probability under~$\sigma$.  Consider an arbitrary agent, say $i$. Since $g$ is strategy-proof, $\tau_i$ can be a best response for agent $i$ with type~$\theta_i$ only if it gives the agent the same utility as reporting $\theta_i$ for each report of the other agents that agent~$i$ believes will be observed with positive probability. Thus, since there are rational expectations in a common prior game, report $\tau_i$ needs to give agent $i$ the same utility as $\theta_i$ when the other agents report $\tau_{-i}$. Since $g$ is non-bossy in welfare-outcome, it has to be the case that $g(\tau_{-i},\theta_i)=g(\tau)$. By Lemma~\ref{Lem:interior}, if $\sigma$ is an empirical equilibrium of $(\Theta,g,p)$, agent~$i$
reports her true type with positive probability in $\sigma$. Thus, $(\tau_{-i},\theta_i)$ is played with positive probability in $\sigma$. Thus, we can iterate over the set of agents and conclude that $g(\theta)=g(\tau)$.

Let us discuss now the proof of the converse statement. First, we observe that it is well-known that the type of robust implementation in statement~1 of the theorem implies the scf is strategy-proof \citep{Dasguptaetal-1979-RES,Bergemann-Morris-2005-Eca}. Thus, it is enough to prove that if~$g$ is strategy-proof and satisfies the robustness property, it has to be non-bossy in welfare-outcome. Our proof of this statement is by contradiction. We suppose to the contrary that for some type~$\theta$, an agent, say~$i$, can change the outcome of~$g$ by reporting some alternative $\tau_i$ without changing her welfare. We then show that the complete information game $(\Theta,g,\theta)$ has an empirical equilibrium in which $(\theta_{-i},\tau_i)$ is observed with positive probability. The subtlety of doing this resides in that our statement is free of details about the payoff environment in which it applies. We have an arbitrary number of agents and we know little about the structure of agents' preferences. If we had additional information about the environment, as say for the second-price auction, the construction could be greatly simplified as in our illustrating example.

To solve this problem we design an operator that responds to four different types of signals,  $\kappa^{\varepsilon,r,\eta,\lambda}:\Delta(\Theta_1)\times\dots\times\Delta(\Theta_n)\rightarrow\Delta(\Theta_1)\times\dots\times\Delta(\Theta_n)$, where $\{r,\lambda\}\subseteq\N$ and $\{\varepsilon,\eta\}\subseteq(0,1)$. This operator has fixed points that are always weakly payoff monotone distributions for $u$. For a given $\varepsilon$, the operator restricts its search of distributions to those that place at least probability $\varepsilon$ in each action for agent $i$. For a given $\eta$, the operator restricts its search of distributions to those that place at least probability $\eta$ in each action for each agent $j\neq i$. If we take $r$ to infinity, the operator looks for distributions in which agent $i$'s frequency of play is almost a best response to the other agents' distribution (constrained by $\varepsilon$). If we take $\lambda$ to infinity, the operator looks for distributions in which for each agent $j\neq i$, her frequency of play is almost a best response to the other agents' distribution (constrained by $\eta$). The proof is completed by proving that for the right sequence of signals, the operator will have fixed points that in the limit exhibit the required properties.  To simplify our discussion without losing the core of the argument, let us suppose that each agent $j\neq i$ has a unique weakly dominant action for each type. Fix $\varepsilon$ and $r$. Since we base the construction of our operator on continuous functions, one can prove that there is $\delta>0$ such that for each fixed point of the operator, if the expected utility of reports $\theta_i$ and $\tau_i$ does not differ in more than $\delta$, then agent $i$ places probability almost the same on these two reports. Let $\eta>0$. If each agent $j\neq i$ approximately places probability $\eta$ in each action that is not weakly dominant and the rest in her dominant action, the utility of agent $i$ from reports $\theta_i$ and $\tau_i$ will be almost the same when $\eta$ is small. Thus, one can calibrate $\eta$ for this difference to be less than $\delta>0$. Let $\eta(\varepsilon,r,\delta)$ be this value. If we take $\lambda$ to infinity keeping $\varepsilon,r,\eta(\varepsilon,r,\delta)$ constant, the distribution of each agent $j\neq i$ in each fixed point of the operator will place, approximately, probability $\eta(\varepsilon,r,\delta)$ in each action that is not weakly dominant. Thus, for large $\lambda$, $\kappa^{\varepsilon,r,\eta(\varepsilon,r,\delta),\lambda}$ has a fixed point in which agent $i$ is playing $\theta_i$ and $\tau_i$ with almost the same probability and all other agents are playing their dominant strategy with almost certainty. We grab one of this distributions. It is the first point in our sequence, which we construct by repeating this argument starting from smaller~$\varepsilon$s and $\delta$s and larger~$r$s.

Interestingly, the conclusions of Theorem~\ref{Th:str-pr} depend on our requirement that the empirical equilibria of the scf generate only truthful outcomes for type spaces in which an agent may know, with certainty, the payoff type of the other agents.

\begin{theorem}\rm\label{Thm:Interior}

Let $g$ be an scf. The following statements are equivalent.
\begin{enumerate}
\item For each full-support prior $p$ and each empirical equilibrium of $(\Theta,g,p)$, say $\sigma$, we have that for each pair $\{\theta,\tau\}\in \Theta$  where $\tau$ is in the support of $\sigma(\cdot|\theta)$, $g(\theta)=g(\tau)$.
\item $g$ is strategy-proof and has essentially unique dominant strategies.
\end{enumerate}
\end{theorem}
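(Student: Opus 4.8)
The plan is to prove the two implications $2\Rightarrow 1$ and $1\Rightarrow 2$ separately. The forward implication $2\Rightarrow 1$ carries the positive content and combines Lemma~\ref{Lem:interior} with the full-support hypothesis; the converse $1\Rightarrow 2$ splits into deriving strategy-proofness (by a construction in the spirit of the proof of Theorem~\ref{Th:str-pr}) and deriving essential uniqueness of dominant strategies (a short argument, again via Lemma~\ref{Lem:interior}).

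For $2\Rightarrow 1$, fix a full-support prior $p$, an empirical equilibrium $\sigma$ of $(\Theta,g,p)$, a state $\theta$, and a profile $\tau$ in the support of $\sigma(\cdot|\theta)$. Since $g$ is strategy-proof, for each $j$ the truthful report $\theta_j$ is a weakly dominant action for agent $j$ with type $\theta_j$ in $(\Theta,g)$, so by Lemma~\ref{Lem:interior} it lies in the support of $\sigma_j(\cdot|\theta_j)$. Because $p$ has full support, for every $i$ and every $m_{-i}\in\Theta_{-i}$ the event ``opponents have true types $m_{-i}$ and report $m_{-i}$'' has weight $p(m_{-i}|\theta_i)\prod_{j\neq i}\sigma_j(m_j|m_j)>0$, so the distribution over opponents' reports faced by type $\theta_i$ has full support on $\Theta_{-i}$. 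Now $\tau_i$ is a best response of type $\theta_i$ to $\sigma_{-i}$ while $\theta_i$ is weakly dominant, so $U_g(\sigma_{-i},\delta_{\tau_i}|p,\theta_i)=U_g(\sigma_{-i},\delta_{\theta_i}|p,\theta_i)$; each summand $u_i(g(m_{-i},\tau_i)|\theta_i)-u_i(g(m_{-i},\theta_i)|\theta_i)$ is nonpositive and each coefficient is positive, hence every summand vanishes, i.e.\ $\tau_i$ is itself weakly dominant for type $\theta_i$. Applying the definition of essential uniqueness of dominant strategies at each state $(m_{-i},\theta_i)$ --- where $\tau_i$ is welfare-equivalent to $\theta_i$, so a change of outcome would contradict weak dominance of $\tau_i$ --- yields $g(m_{-i},\tau_i)=g(m_{-i},\theta_i)$ for every $m_{-i}\in\Theta_{-i}$ and every $i$. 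Substituting one coordinate at a time, $\tau_j$ for $\theta_j$, gives $g(\theta)=g(\tau)$.

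For $1\Rightarrow 2$, assume statement~1. First I would derive strategy-proofness. If $g$ is not strategy-proof, pick $\theta^0,i,\tau_i^0$ with $u_i(g(\theta^0_{-i},\tau_i^0)|\theta^0_i)>u_i(g(\theta^0)|\theta^0_i)$ and set $\tau_i^*\in\argmax_{r_i\in\Theta_i}u_i(g(\theta^0_{-i},r_i)|\theta^0_i)$, so $g(\theta^0_{-i},\tau_i^*)\neq g(\theta^0)$. For a full-support prior $p$ concentrated near the degenerate prior at $\theta^0$, one would build an empirical equilibrium of $(\Theta,g,p)$ in which every agent $j\neq i$ reports truthfully with probability close to one and type $\theta^0_i$ of agent $i$ plays $\tau_i^*$ with positive probability; then, at state $\theta^0$, the non-truthful outcome $g(\theta^0_{-i},\tau_i^*)$ occurs with positive probability, contradicting statement~1. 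Second, with strategy-proofness in hand, suppose $g$ fails essential uniqueness of dominant strategies: the negation of that property, together with strategy-proofness, yields $\theta,i,\tau_i$ such that $\tau_i$ is weakly dominant for type $\theta_i$ in $(\Theta,g)$ yet $g(\theta_{-i},\tau_i)\neq g(\theta)$. Fix any full-support prior $p$ and any empirical equilibrium $\sigma$ of $(\Theta,g,p)$. By Lemma~\ref{Lem:interior}, $\tau_i$ lies in the support of $\sigma_i(\cdot|\theta_i)$ and, for each $j\neq i$, $\theta_j$ lies in the support of $\sigma_j(\cdot|\theta_j)$; since $\theta$ is in the support of $p$, the profile $(\theta_{-i},\tau_i)$ is realized with positive probability at state $\theta$, producing $g(\theta_{-i},\tau_i)\neq g(\theta)$ and contradicting statement~1.

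I expect the strategy-proofness step of $1\Rightarrow 2$ to be the main obstacle. One cannot simply declare the profile ``all agents truthful except type $\theta^0_i$ of agent $i$'' to be an empirical equilibrium: when $g$ is not strategy-proof other types may themselves have profitable deviations, so this profile need not even be a Bayesian Nash equilibrium, and one must additionally guarantee that the deviation actually moves the outcome away from $g(\theta^0)$. The remedy is to re-run the signal-indexed fixed-point construction from the proof of Theorem~\ref{Th:str-pr}: an operator on profiles of mixed reports whose fixed points are weakly payoff monotone, which pins agents $j\neq i$ to perturbations of their truthful reports and agent $i$ to near-best responses, followed by a limit along a suitable sequence of signals. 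Relative to Theorem~\ref{Th:str-pr} the only change is that the complete-information structure is replaced by a near-degenerate full-support prior, which enters only through the conditional beliefs and leaves the rest of the argument intact.
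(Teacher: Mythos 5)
Your $2\Rightarrow 1$ direction and the essential-uniqueness half of $1\Rightarrow 2$ are correct and essentially identical to the paper's proof: full support of $p$ together with Lemma~\ref{Lem:interior} makes every opponent report profile realized with positive probability, which upgrades the best-response property of $\tau_i$ to state-by-state welfare equivalence with $\theta_i$, after which essential uniqueness forces $g(m_{-i},\tau_i)=g(m_{-i},\theta_i)$ for all $m_{-i}$ and the coordinate-by-coordinate substitution finishes; and your necessity argument for essential uniqueness via Lemma~\ref{Lem:interior} is exactly the paper's.

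The gap is in the strategy-proofness half of $1\Rightarrow 2$. You propose to negate strategy-proofness and then construct, for a near-degenerate full-support prior, an empirical equilibrium of $(\Theta,g,p)$ in which agent $i$'s profitable deviation is played with positive probability while all $j\neq i$ report truthfully, by ``re-running'' the fixed-point construction from the proof of Theorem~\ref{Th:str-pr}. That construction does not transfer: it pins each $j\neq i$ to perturbations of her weakly dominant actions precisely because, in a strategy-proof game, the logistic limits concentrate on dominant actions against any interior profile. Once $g$ is not strategy-proof, agents $j\neq i$ may have no dominant actions, the limit of their quantal responses is an endogenous equilibrium object rather than truthful play, and nothing guarantees that the limiting equilibrium puts positive weight on $(\theta^0_{-i},\tau_i^*)$ at state $\theta^0$ or that its outcome there differs from $g(\theta^0)$. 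You have identified the obstacle but not removed it. The paper avoids any construction here: since empirical equilibria exist for every finite game, statement~1 already guarantees, for every full-support prior, a Bayesian Nash equilibrium $\sigma$ all of whose on-path outcomes are truthful. For fixed $\theta$, $i$, $\tau_i$, take the full-support prior placing mass $1/2-\varepsilon/2$ on each of $\theta$ and $(\theta_{-i},\tau_i)$ and uniform mass elsewhere; the equilibrium requirement that type $\theta_i$ weakly prefers its own strategy to mimicking type $\tau_i$'s strategy reads
\[
\sum_{\mu_{-i}\in\Theta_{-i}}p(\mu_{-i}|\theta_i)\,u_i(g(\mu_{-i},\theta_i)|\theta_i)\;\geq\;\sum_{\mu_{-i}\in\Theta_{-i}}p(\mu_{-i}|\theta_i)\,u_i(g(\mu_{-i},\tau_i)|\theta_i),
\]
and letting $\varepsilon\rightarrow0$ yields $u_i(g(\theta)|\theta_i)\geq u_i(g(\theta_{-i},\tau_i)|\theta_i)$. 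This is the standard partial-implementation argument of \citet{Bergemann-Morris-2005-Eca} adapted to mixed strategies; replacing your constructive sketch with it closes the proof.
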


Lemma~\ref{Lem:interior} and Theorems~\ref{Th:str-pr} and~\ref{Thm:Interior} give us a clear description of the weakly payoff monotone behavior that can be observed when a strategy-proof scf is operated. In the next section we contrast these predictions with experimental evidence on strategy-proof mechanisms.

\section{Experimental and empirical evidence}\label{Sec:expevidence}

\subsection{Dominant strategy play}\label{Sec-DSplay}

The performance of strategy-proof mechanisms in an experimental environment has attracted a fair amount of attention. Essentially, experiments have been run to test the hypothesis that dominant strategy equilibrium is a reasonable prediction for these games. The common finding is a lack of support for this hypothesis in most simultaneous move dominant strategy mechanisms.\footnote{The only exceptions appear to be extensive form dominant strategy mechanisms satisfying further simplicity conditions \citep{Li-AER-17}.}

Our results provide an alternative theoretical framework from which one can reevaluate these experimental results. Theorems~\ref{Th:str-pr} and~\ref{Thm:Interior} state that as long as empirical distributions of play are weakly payoff monotone we should expect two features in data. First, we will never see agents' behavior approximate a Nash equilibrium that is not truthful equivalent in two situations: (i) the scf is strategy-proof and non-bossy in welfare-outcome; or (ii) each agent believes all other payoff types are possible and the scf has essentially unique dominant strategies. Second, one cannot rule out that sub-optimal equilibria are approximated by weakly payoff monotone behavior when the scf violates non-bossiness in welfare-outcome and information is complete.

It is informative to note that our first conclusion still holds if we only require, instead of weak payoff monotonicity, that there is a lower bound on the probability with which an agent reports truthfully, an easier hypothesis to test. Thus, in order to investigate whether a sub-optimal Nash equilibrium is approximated in situations (i) and (ii), it is enough to verify that truthful play is non-negligible and does not dissipate in experiments with multiple rounds. This is largely supported by data.

In Table~\ref{Tab:Frq-dom-st}, we survey the literature for experimental results with dominant strategy mechanisms.  We find ten studies across a variety of mechanisms. In all of these studies we are able to determine, based on the number of pure strategies available to each player, how often a dominant strategy would be played if subjects uniformly played all pure strategies.\footnote{\citet{HEALY-2006-JET} does not explicitly bound reports. We take as basis the range of submitted reports. } In every experiment, rates of dominant strategy play exceed this threshold.\footnote{These results are not different if one looks only at initial or late play in the experiments.} A simple binomial test---treating each of these nine papers as a single observation---rejects any null hypothesis that these rates of dominant strategy play are drawn from a random distribution with median probability at or below these levels ($p<0.001$). Thus one would reasonably conclude that rates of dominant strategy play should exceed that under uniform support.\footnote{Our benchmark of uniform bids is well defined in each finite environment. Thus it allows for a meaningful aggregation of the different studies.  For the second-price auction, an alternative comparison is the rate of dominant strategy play in this mechanism and the frequency of bids that are equal to the agent's own value in the first-price auction. Among the experiments we survey, \citet{Andreoni-Che-Kim-2007-GEB} allows for this direct comparison in experimental sessions that differ only on the price rule. In this experiment, dominant strategy play in the second-price auction is 68.25\%, 57.50\%, 51.25\%; and in the first-price auction the percentage of agents bidding their value is 6.17\%, 11.92\%, 19.48\% for three corresponding information structures. Because there are only two sessions each under the two auction mechanisms, non-parametric tests cannot show these differences to be significant at the session level ($p=1/3$). They are significantly different at the subject level for a variety of non-parametric and parametric tests ($p<0.001$).}

It is evident then that the accumulated experimental data supports the conclusion that under conditions (i) and (ii) agents' behavior is not likely to settle on a sub-optimal equilibrium. As long as agents are not choosing a best response, the behavior of the other agents will continue flagging their consequential deviations from truthful behavior as considerably inferior.\footnote{Recall that our prediction is that under conditions (i) and (ii), behavior will not settle in a suboptimal equilibrium, not that behavior will necessarily converge to a truthful equilibrium.}

\setcounter{page}{18}

\afterpage{
\begin{landscape}
\begin{table}[]\footnotesize
\centering
   \vspace*{-2cm}
  \begin{tabular}{lccccc}
  \hline
    scf&\begin{tabular}{c}\% Dominant\\ Strategy\end{tabular}&\begin{tabular}{c}no. of\\ available\\ pure\\ strategies\end{tabular}&\begin{tabular}{c}\% Dominant\\ if strategies\\ played at\\ random\end{tabular}&\begin{tabular}{c}do payoffs\\of played\\strategies\\ exceed\\non-played?$^1$\end{tabular}&Description/Source\\
    \hline
    \multicolumn{1}{m{1.5cm}}{2nd-Price Auction}&50.0&\begin{tabular}{c}50\\(mean)\end{tabular}&2.0&N.A.&\multicolumn{1}{m{\lbox}}{- 6 sessions with number of rounds from 10 to 24; 
    \citet[Table 8]{COPPINGER-et-al-1980}.}
    \\
    &27.0, 32.5&$>2830$&$<0.4$&N.A.&\multicolumn{1}{m{\lbox}}{- Two sessions with 24 and 35 rounds; totals for experiments with groups of 5 and 10 agents respectively; dominant strategies classified as +/-0.05 from true value.;\citet[Table 2]{Kegel-Levin-1993-EJ}.}
    \\
    &68.2, 57.5, 51.2&201&0.5&Y, Y, Y&\multicolumn{1}{m{\lbox}}{- 30 rounds; totals correspond to incomplete info, partial info, and perfect info, respectively; 
    four-agent groups randomly drawn each period; \citet{Andreoni-Che-Kim-2007-GEB}.* In the referenced paper, dominant strategies are classified as +/-0.01 from true value, producing slightly different numbers.}
    \\
    &44.5&1,000,000&0.0&N.A.&\multicolumn{1}{m{\lbox}}{- 20 rounds; Percentages pooled over all sessions with different information; two-agent groups randomly drawn each period; 
    \citet{Cooper-Fang-2008-EJ}.}
    \\
    &17.8&601&0.2&Y&\multicolumn{1}{m{\lbox}}{- 10 rounds; four-agent groups; 
    \citet{Li-AER-17}.$^*$}
    \\
    +X Variant&20.4&601&0.2&Y&\multicolumn{1}{m{\lbox}}{- 10 rounds; four-agent groups; 
    \citet{Li-AER-17}.$^*$}
    \\\hline
    Pivotal&10.5, 8.25&2001&0.0&Y, Y&\multicolumn{1}{m{\lbox}}{- 10 rounds; totals for experiments with groups of 5 and 10 agents respectively; 2001 actions available to each agent;  \citet{Attiyeh2000-PC}.$^*$}
    \\
    &17, 14, 47&51&2.0&N.A.&\multicolumn{1}{m{\lbox}}{- 10 rounds; total for experiments with three alternative description of mechanism; \citet{Kawagoe-Mori-2001-PC}.}
    \\
    &73.3&25&4.0&Y&\multicolumn{1}{m{\lbox}}{- 8 to 10 rounds; two-agent groups;  each agent has two weakly dominant actions in each game; \citet{Cason-et-al-2006-GEB}.*}
    \\
   cVCG&54 &$>505,000$&0.0&N.A.&\multicolumn{1}{m{\lbox}}{- Public good provision with quasi-linear preferences; utility for public good has two parameters; unbounded reports; 4 sessions of 50 rounds \citep{HEALY-2006-JET}.}
    \\
    \hline
    \multicolumn{1}{m{1.8cm}}{Student Optimal Deferred Acceptance}&72.2, 50&5040&0.0&N.A.&\multicolumn{1}{m{\lbox}}{- 1 round; totals for uniformly random and correlated priority structures; \citet{CHEN-Sonmez-2006-JET}.}
    \\
    \hline
    \multicolumn{1}{m{1.8cm}}{Top Trading Cycles}&55.6, 43.1&5040&0.0&N.A.&\multicolumn{1}{m{\lbox}}{- 1 round; totals for uniformly random and correlated priority structures; \citet{CHEN-Sonmez-2006-JET}.}
    \\
    \hline
    \multicolumn{1}{m{1.8cm}}{Random Serial Priority}&71.0&24&4.2&Y&\multicolumn{1}{m{\lbox}}{- 10 rounds; four-agent groups; \citet{Li-AER-17}.$^*$}
    \\
    \hline
  \end{tabular}
    \caption{Frequency of dominant strategy play in strategy-proof mechanisms; $^*$ denotes statistics calculated directly from data, not reported by authors.\\$^1$ None of the ``Y''s in the table would change if this analysis were performed excluding any decisions where subjects chose the dominant strategy.}\label{Tab:Frq-dom-st}
    \vspace*{-2cm}
\end{table}
\end{landscape}

}

\subsection{Observing empirical equilibria}

Among the experiments we surveyed, \citet{Cason-et-al-2006-GEB}, \citet{HEALY-2006-JET}, and \citet{Andreoni-Che-Kim-2007-GEB} involve the operation of a strategy-proof mechanism that violates non-bossiness in welfare-outcome in an information environment in which information is not interior. These experiments offer us the chance to observe Nash equilibria attaining outcomes different from the truthful one with positive probability.

In \citet{Cason-et-al-2006-GEB}, two-agent groups (row and column) play eight to ten rounds in randomly rematched groups with the same pivotal mechanism payoff matrix over all rounds.\footnote{Agents are informed of their payoffs, but not of the payoff of the other agent. Each agent knows that the payoff of the other agent does not change across rounds, however. Thus, it is plausible that agents form beliefs about their opponents play that are not interior. Indeed, after some rounds, each agent has a small sample of the distribution of play of the other agent's fixed payoff type.}  This experiment was designed to test ``secure implementation'' \citep{Saijo-et-al-2007-TE}. This theory obtains a characterization of scfs whose direct revelation game implements the scf itself both in dominant strategies and Nash equilibria for all complete information priors. By running experiments with the pivotal mechanism, which violates the secure implementation requirements, the authors illustrated that this may be compatible with the observation of equilibria that are not truthful equivalent. Indeed, these authors argue that even though deviations from dominant strategy play are arguably persistent in their pivotal mechanism experiment, virtually all subjects are playing mutual best responses to the population of subjects by the end of the experiment \citep[c.f., Figure 7,][]{Cason-et-al-2006-GEB}.\footnote{Secure implementation is achieved by strategy-proof scfs that are non-bossy in welfare-outcome and satisfy a rectangularity condition we state in Theorem~\ref{Thm:sec-impl}.  See our analysis of secure implementation in the context of robust implementation in Sec.~\ref{Sec:robustimp} for details.}

In \citet{HEALY-2006-JET}, five-agent groups with fixed utility functions play fifty rounds in a mechanism that belongs to the VCG family to choose the level of provision of a public good. Agents have quasi-linear preferences and their utility for the public good is determined by two parameters.\footnote{Again as in \citet{Cason-et-al-2006-GEB}, agents' types are fixed, but agents are not provided with the information of the payoff matrix of the other agents.} Since it is central to his analysis, the author directly addresses the issue and concludes that ``weakly dominated $\varepsilon$-Nash equilibria are observed, while the dominant strategy equilibrium is not'' \citep[Result 4][]{HEALY-2006-JET}.

In \citet{Andreoni-Che-Kim-2007-GEB}, groups of four agents sequentially play three simultaneous games in each round for thirty rounds. Groups are rematched each round and play an auction game with the same values but increasing precision of information about the other players. The first game involves no information about the other players' valuations beyond the distribution from which they are drawn. The final game involves complete information. These authors run separate sessions with the first-price auction and the second-price auction.

\citet{Andreoni-Che-Kim-2007-GEB}'s main objective is to experimentally evaluate the effect of information structure on the first-price and second-price auctions. Their theoretical benchmark is the information-driven comparative statics developed by \citet{KIM-CHE-2004-JEB} for the first-price auction, and the dominant strategy hypothesis, which implies there is no role of information structure, for the second-price auction.  Thus, these authors designed and carried out an ideal experiment to evaluate the operation of a bossy strategy-proof scf that has unique dominant strategies, the second-price auction, in both full-support and complete information environments. In contrast to the dominant strategy hypothesis, empirical equilibrium analysis has sharp predictions for such a mechanism in these environments.

\begin{figure}[t]
\centering
\includegraphics[width = 0.49\textwidth]{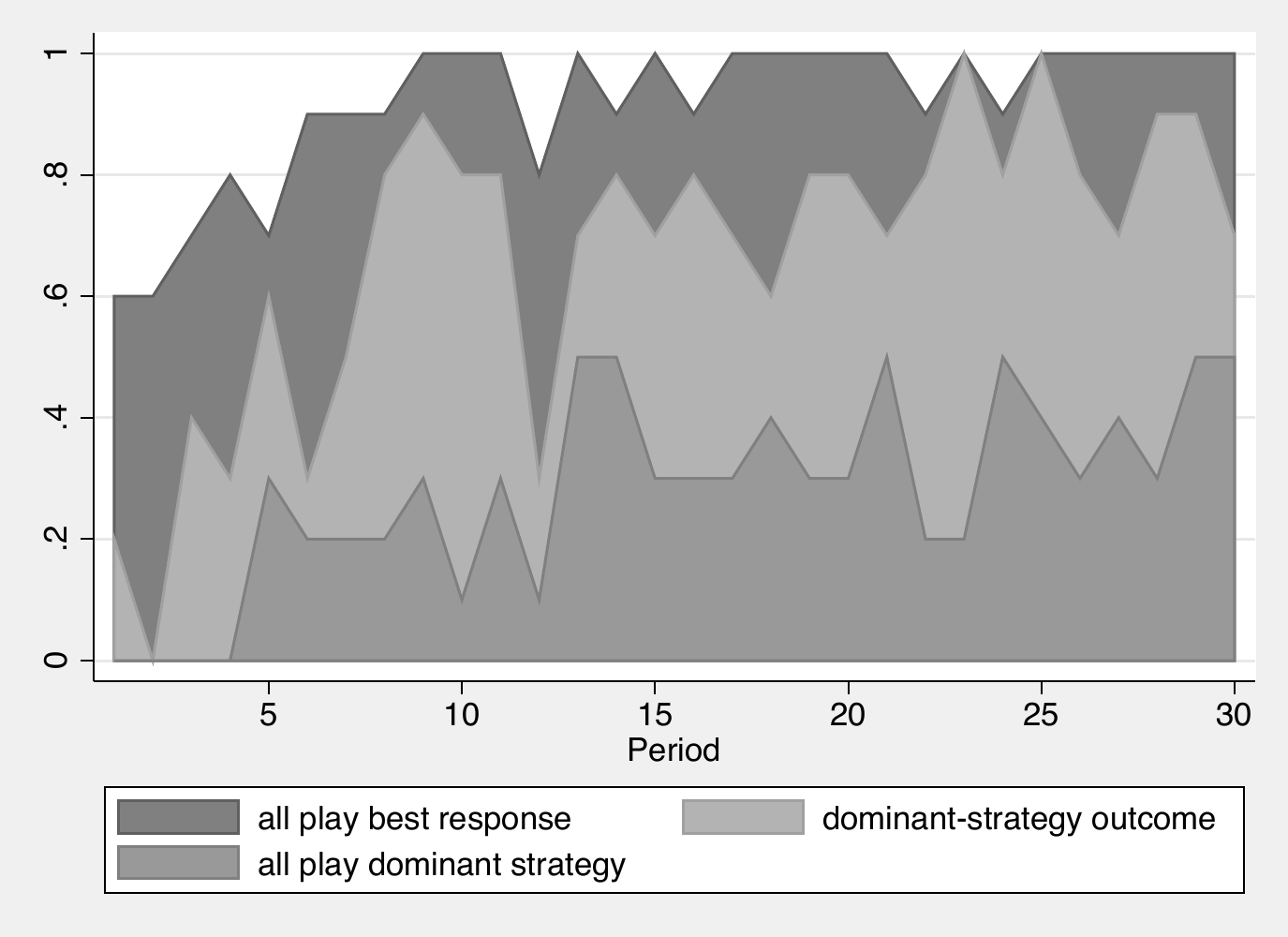}
\includegraphics[width = 0.49\textwidth]{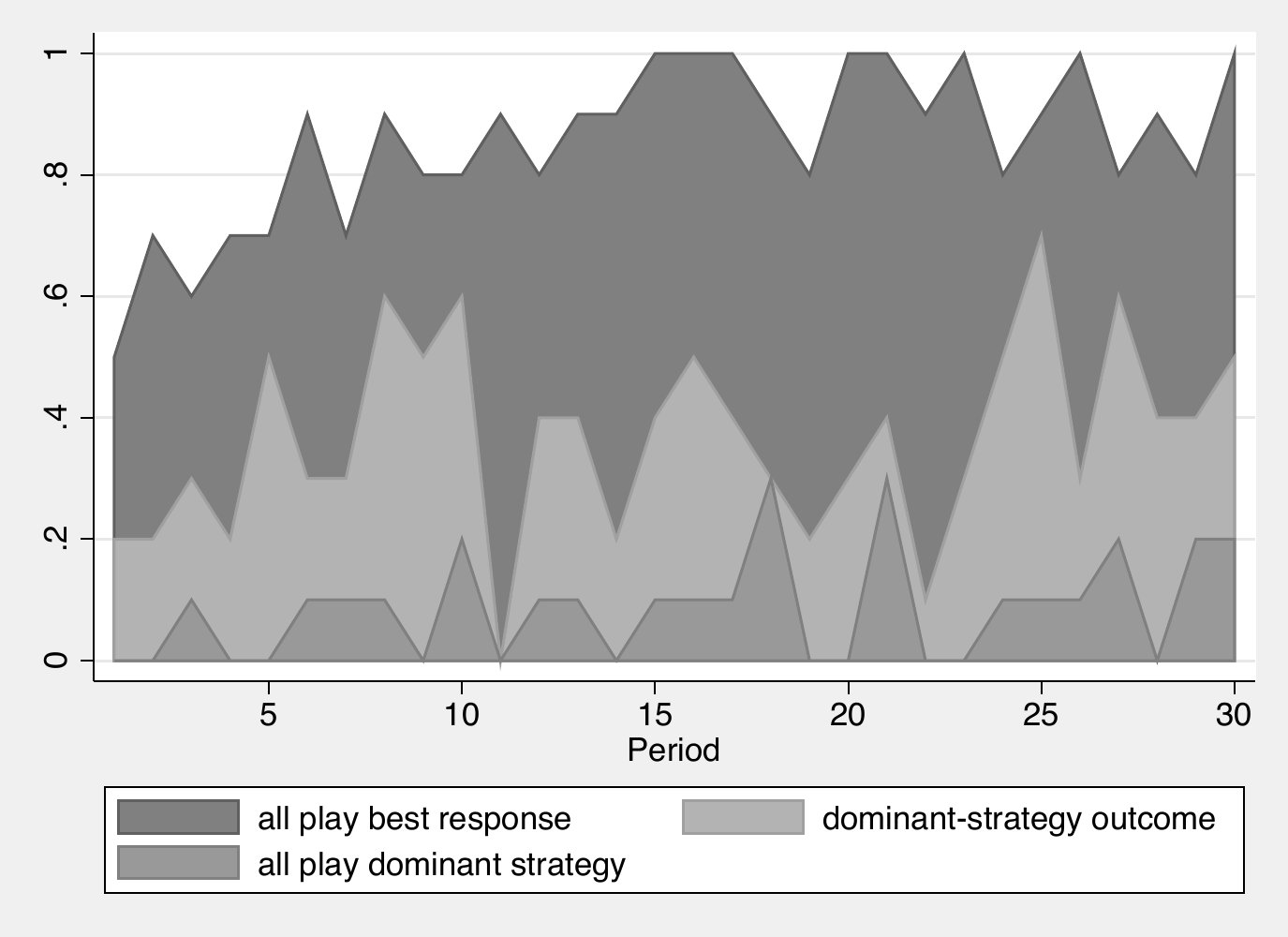}
\caption{\label{cason_andreoni}Group outcomes in 4-person, second-price auctions in \citet{Andreoni-Che-Kim-2007-GEB} under full-support incomplete (left) and complete (right) information. The dark gray area indicates the proportion of outcomes where all subjects play mutual best responses to the actions of all other group members. The light gray area indicates outcomes where the transaction associated with the dominant strategy outcome occurs, that is, the subject with the highest valuation obtains the item and pays the amount of the second highest valuation. The medium gray area indicates the percentage of group outcomes where all subjects play a dominant strategy.  Note that each level necessarily contains the subsequent level. Subjects are rematched randomly across a group of 20 each period.}
\end{figure}

One can argue that frequencies of play in all treatments in \citet{Andreoni-Che-Kim-2007-GEB}'s experiment accumulate towards a Nash equilibrium.  Fig.~\ref{cason_andreoni} shows the proportion of outcomes where all subjects in a group play best-responses (dark gray), where the subject with the highest valuation obtains it at the second highest valuation (light gray), and all subjects play dominant strategies (medium gray) under full-support incomplete information (left) and complete information (right).\footnote{We concentrate our analysis on the extreme information structures in \citet{Andreoni-Che-Kim-2007-GEB} design for which Theorems~\ref{Th:str-pr} and~\ref{Thm:Interior} produce sharp predictions.} In both cases, virtually all subjects are playing mutual best responses to the population of subjects in the second half of the experiment. Note that frequencies of best response play plotted in Fig.~\ref{cason_andreoni} are the percentage of groups in which all four agents end up playing a best response to each other. Even when this percentage is 80\%, individual rates of best response play is about 95\%.

Empirical equilibrium analysis reveals that behavior that is weakly payoff monotone and approximates mutual best responses in this experiment will necessarily have certain characteristics. For the second-price auction if information is interior, as in the first information treatment, this type of behavior can \emph{only} approximate a truthful equivalent Nash equilibrium. If information is complete, as in the last information treatment, this type of behavior \emph{can} accumulate towards a Nash equilibrium in which the lower value agents randomize with positive probability. Both phenomena are supported by the data.

Fig.~\ref{andreoni_standardized} allows us to understand behavior in both information structures. The figure standardizes bids to valuations (the highest valuation is assigned a value of 4, the second highest a value of 3, and so on) and shows the median bid and the range that contains the higher and lower 85\% of bids for bids by each of the four ranked valuation types. In both treatments the median bid for any of the four types generally falls on its respective valuation, consistent with dominant strategy play.

In the full-support incomplete information treatment, agents' deviations from their dominant strategies  do not induce consequential deviations from the truthful equilibrium. After the initial five rounds, median bids are the agents' own values (Fig.~\ref{andreoni_standardized} (left)). In the last twenty five rounds, 74.4\% outcomes are truthful (Fig.~\ref{cason_andreoni} (left)); 97.2\% outcomes are efficient, i.e., such that a highest valuation agent wins the auction (Fig.~\ref{andreoni_efficiency} (left)); in 94.4\% of outcomes the price is determined by the bid of a second valuation agent; and on average the price paid by the winner differs in 1.188 points (average of the absolute value of differences) from the second highest valuation (Fig.~\ref{andreoni_efficiency} (right)). Thus,  the mechanism is arguably achieving the social planner's objectives. It is virtually assigning the good to a highest valuation agent and it is essentially raising revenue equal to the second highest valuation.

\begin{figure}[t]
\centering
\includegraphics[width = 0.49\textwidth]{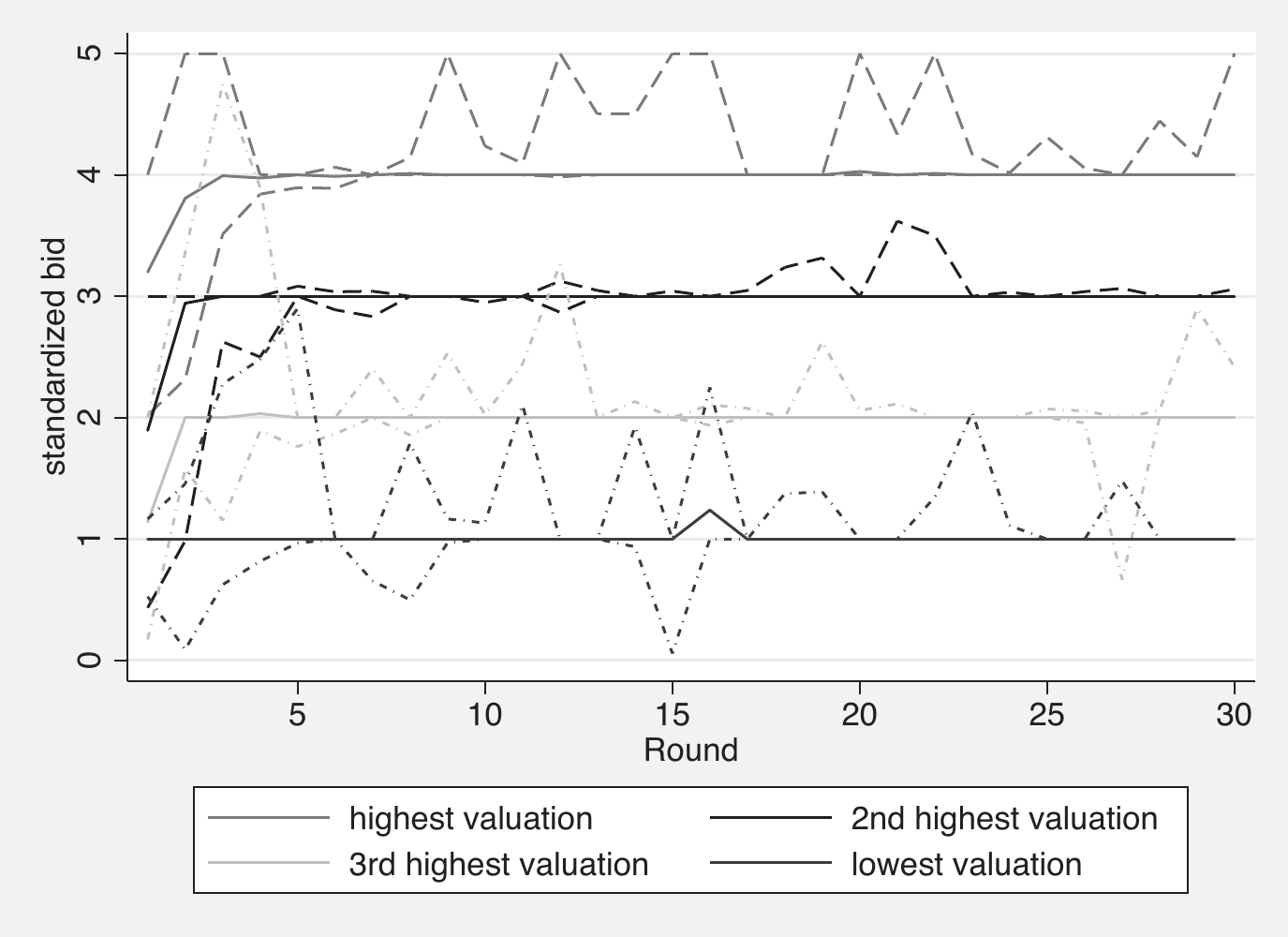}
\includegraphics[width = 0.49\textwidth]{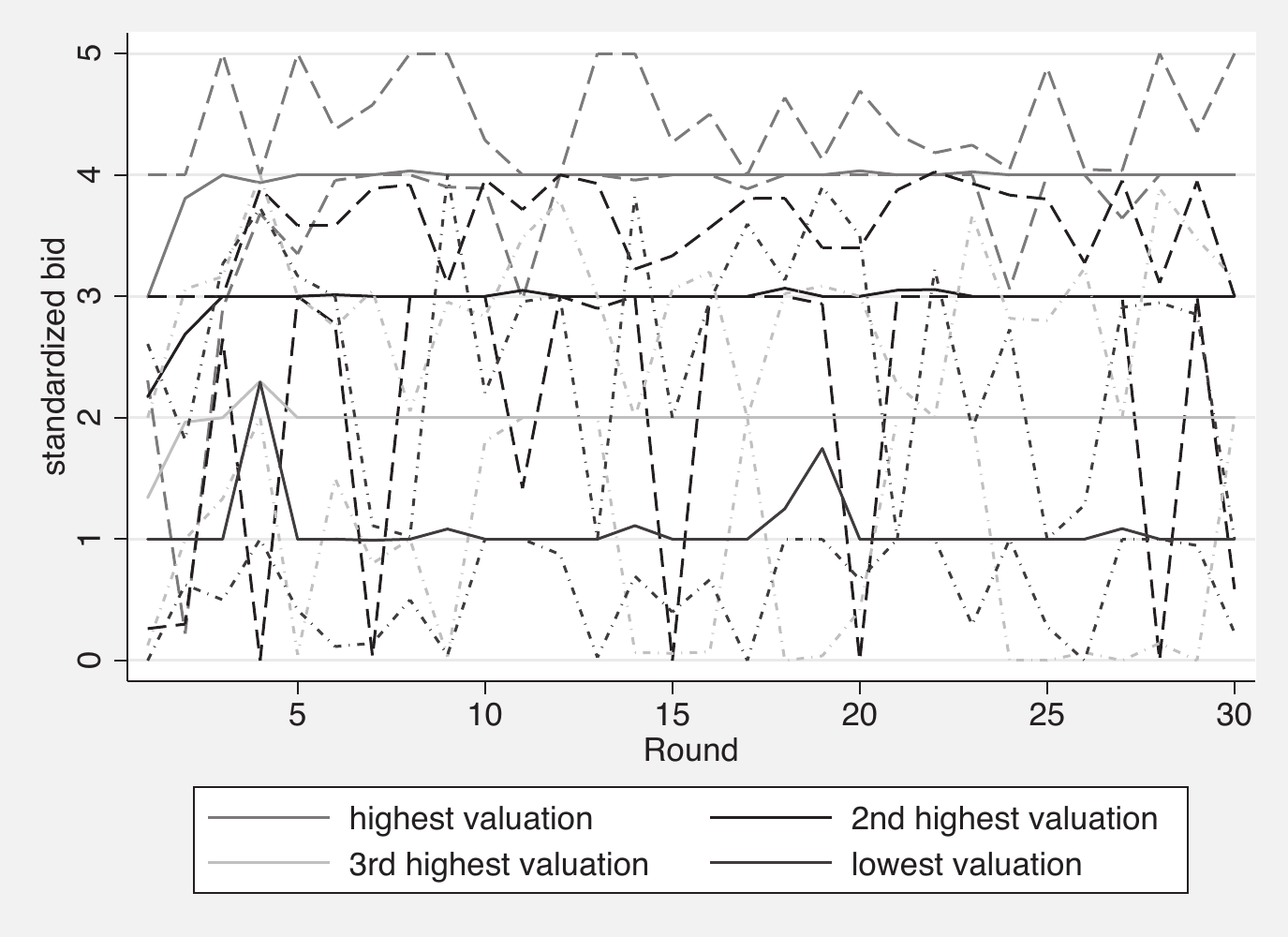}
\caption{\label{andreoni_standardized}Median bid and 15th-85th percentile range by valuation type in 4-person, second-price auctions of \citet{Andreoni-Che-Kim-2007-GEB} under incomplete (left) and complete (right) information. Bids are standardized so that the valuation of the 1st-4th valuations in the specific auction are assigned values 4--1, respectively. Bids of 100 (the highest possible valuation) and 200 (the highest possible bid) are assigned values of 5 and 6, respectively. If two valuation types have the same value, valuation order is randomly assigned. Bids between two valuations are standardized by $(bid-valuation_j)/(valuation_i-valuation_j)$ where $i$ is the highest valuation a bid exceeds and $j$ is the next highest valuation. Bids below the lowest valuation are standardized on the interval between 0 and the lowest valuation. Bids above the highest valuation are standardized either on the interval between the highest valuation and 100 (values of 4--5), or 100 and 200 (values of 5--6). For example, for the four valuations 80, 40, 25, 10, bids of 150, 40, 30, and 5 would be 5.5, 3, 2.33, and 0.5, respectively.}
\end{figure}

In the complete information treatment, after five rounds median bids are also the agents' own values (Fig.~\ref{andreoni_standardized} (right)). Differently from the incomplete information case, deviations from truthful behavior do not dissipate and are consequential. In the last twenty five rounds, 38.4\% outcomes are truthful (Fig.~\ref{cason_andreoni} (right)); 91.6\% outcomes are efficient, i.e., such that a highest valuation agent wins the auction (Fig.~\ref{andreoni_efficiency} (left)); in 68.4\% of outcomes the price is determined by the bid of a second valuation agent; and on average the price paid by the winner differs in 8.704 points from the second highest valuation (Fig.~\ref{andreoni_efficiency} (right)).\footnote{\cite{Andreoni-Che-Kim-2007-GEB} only report two sessions under the second price auction. Each features a within-session comparison of these two information structures. Because there are only two paired comparisons at the session level, non-parametric tests cannot show these differences to be significant ($p=0.5$). At the subject level, they are significantly different for a variety of non-parametric and parametric tests ($p<0.001$).} Thus, even though the mechanism is assigning the good to the right agent, it is raising a revenue that is persistently away from the social planner's objective.

A simple reason explains the differences in behavior between treatments. Under incomplete information there is a penalty for a player to deviate too much from his/her dominant strategy. There is no corresponding penalty under complete information. As long as a lower valuation player does not outbid the first, the payoff of the lower valuation agent will be zero regardless. Together these experiments reveal that agents do react to pecuniary incentives and use information  and observed frequencies of play of the other agents in a meaningful way. They do not preemptively react to a hypothetical tremble of the other agents, however. In the complete information case the highest valuation agent  persistently overbids and the other agents persistently bid on a wide range under the highest valuation agent's value. As long as these behaviors are essentially separated, they are mutual best responses. On the other hand, in the incomplete information treatment, for each bid, there is a positive probability that at least an agent draws that bid as valuation. Since agents bid their values with high probability (68.2\% on average), there is a non-trivial chance that a significant deviation from truthful behavior is suboptimal. Thus, agents take into account a potential loss in utility, but only when there is an actual significant probability of it being realized.\footnote{Since the first experiments on the second-price auctions with private values of \citet{COPPINGER-et-al-1980} and  \citet{Kegel-Levin-1993-EJ}, experimental economists have observed that even though agents do not play their dominant strategy in these games, the probability with which they would have ended up disciplined by the market given what the other are doing is very low. Our analysis goes beyond this observation by showing that as predicted by empirical equilibrium analysis, the degree to which these deviations are consequential is linked to the non-bossiness properties of the scf and the information structure.}

\begin{figure}[t]
\centering
\includegraphics[width = 0.49\textwidth]{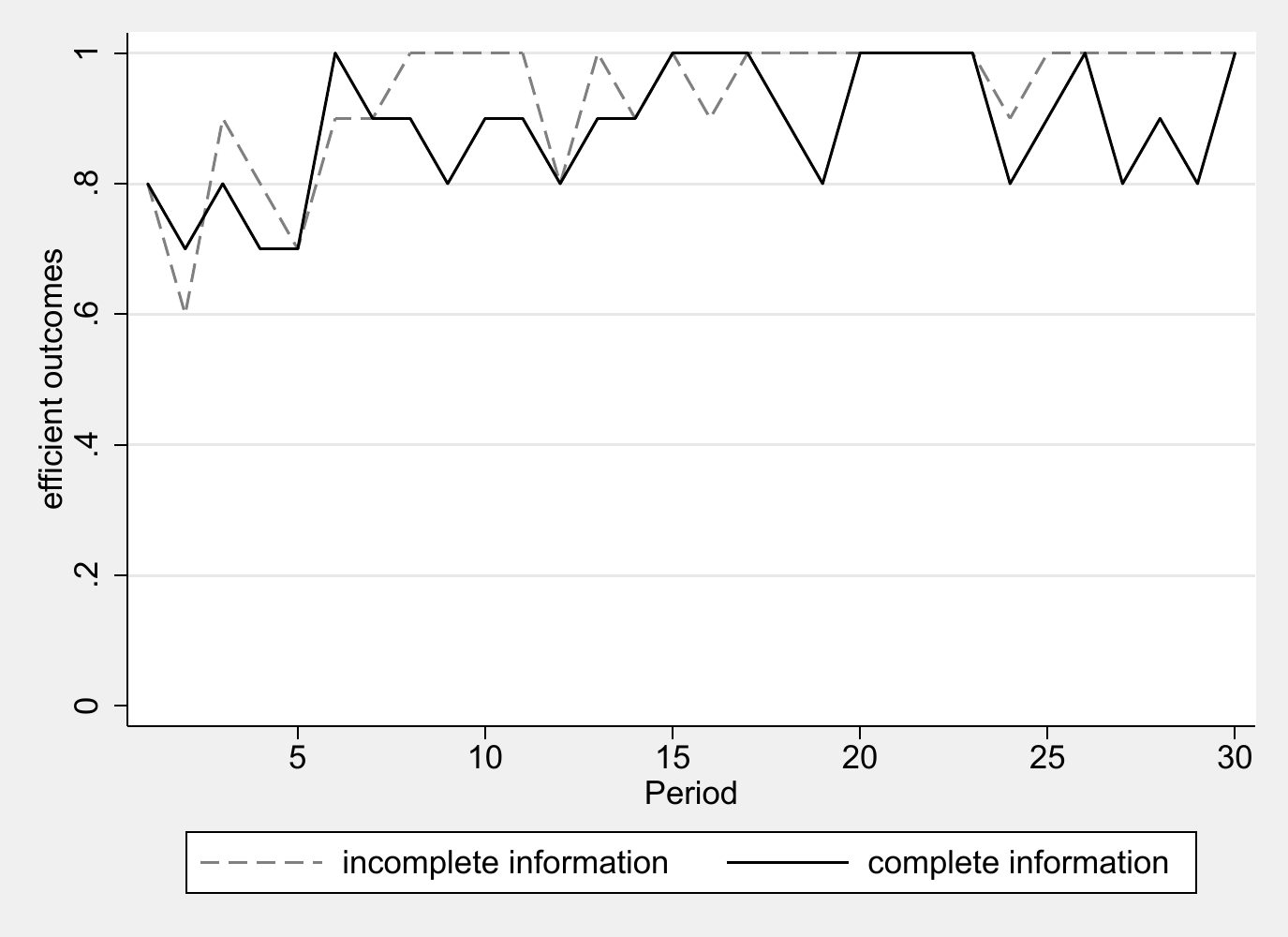}
\includegraphics[width = 0.49\textwidth]{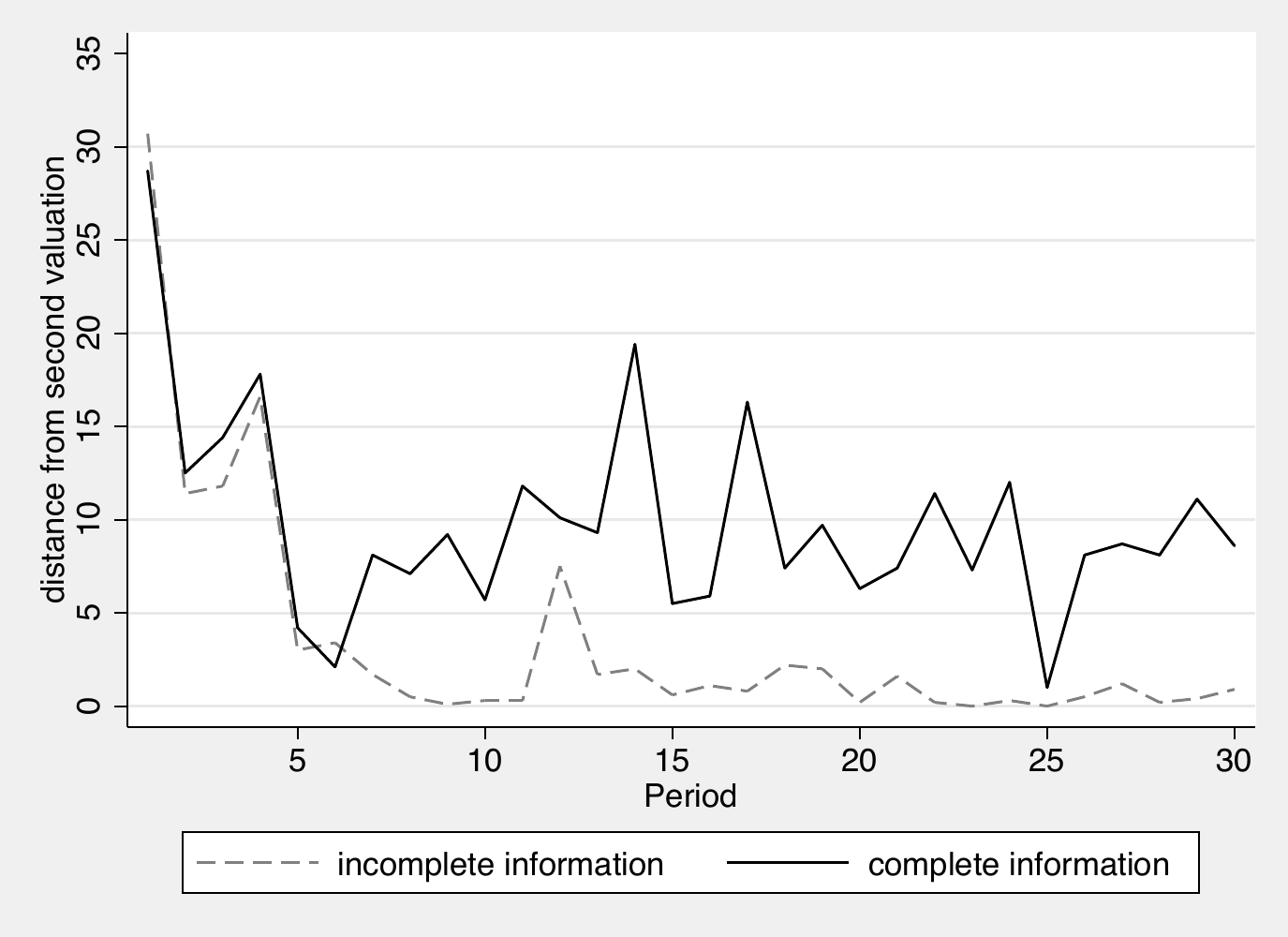}
\caption{\label{andreoni_efficiency}Frequency of efficient outcomes (left) and average distance (conditional on efficient outcome) between the price and a second valuation (right) in the second-price auction experiments of \citet{Andreoni-Che-Kim-2007-GEB} in the full-support incomplete and complete information treatments.}
\end{figure}

%

\subsection{Payoff monotonicity}

One of the advantages of empirical equilibrium analysis is that it is based on an observable property of behavior. That is, the conclusions of our theorems will hold whenever empirical distributions are weakly payoff monotone. Thus, evaluating the extent to which agents frequencies of play satisfy this property allows us to understand better the positive content of our theory.

Evaluating weak payoff monotonicity is an elusive task, however. In realistic games as those in the experiments we surveyed, action spaces and type spaces are large (e.g., \citealp{Attiyeh2000-PC} has 2001 actions). This makes the data requirements for fully testing payoff monotonicity unrealistic. It is plausible that data can point to differences on frequencies of play between two given actions for a certain agent type. In order to test that this is consistent with weak payoff monotonicity one would need to verify that the expected payoffs of these actions given what the other agents are doing are ranked in accordance to the frequencies of play of these actions. Doing so requires, in most cases, that one has a good estimate of the \textit{whole} distribution of play for all agent types.

Even though fully testing weak payoff monotonicity is not feasible with realistic data sets, one can test for certain markers of this property that are less demanding on data. First, in weakly payoff monotone data sets there should be a positive association between the frequencies with which actions are played and their empirical expected utility. For the four studies where we have sufficient data \citep{Andreoni-Che-Kim-2007-GEB, Attiyeh2000-PC, Cason-et-al-2006-GEB, Li-AER-17}, we can compare the actual payoffs earned with each action choice with the counterfactual payoffs had a subject chosen a different action. If subjects choose actions independent of payoffs---a gross violation of weak monotonicity---we should suspect the differences between the average payoffs of played strategies and counterfactual payoffs of non-played strategies to be evenly distributed around zero. Instead we find in all cases the average payoffs of played strategies \emph{exceed} those of non-played strategies.\footnote{Using a conditional-logistic regression also produces positive coefficients in all cases. It also assumes a specific formalized structure on subject choice, making it a less general test.} Treating the 30 total sessions across these four studies as independent observations, we can easily reject the null hypothesis that strategies are played independent of expected payoffs $(p<0.001)$.\footnote{Specifically, in 30 out of 30 sessions the average strategy subjects played in a round had higher expected payoffs than those they didn't play. If we exclude all instances where subjects played a dominant strategy, this result holds in 28 out of 30 sessions.}

Not all features of data are in line with weak payoff monotonicity, however. We are aware of three of these. First, in the Pivotal mechanism experiment of \citet{Cason-et-al-2006-GEB}, there are two dominant strategies for each agent. While the Column agent chooses them with similar frequencies (36.1\% and 38.3\%), the Row agent chooses them with frequencies 51.1\% and 19.4\%. 
Parametric paired t-tests and non-parametric signed rank and sign tests suggest the later difference is statistically significant at the subject level $(p<0.01)$,  but not the former. Second, there is a well documented propensity of overbidding in second-price auctions
. This does not have to be necessarily at odds with weak payoff monotonicity. Agents who draw larger values will find overbidding with respect to value to be a less costly mistake than underbidding. Low value agents will have fewer bids below their value than above their value. Thus, such an agent's distribution of play can still be weakly payoff monotone and in aggregate overbid more than underbid. However, Figure 1 in \citet{Andreoni-Che-Kim-2007-GEB}, which depicts the frequency of the difference between the bid of the low value agents and the maximal value, shows that these agents place significantly higher weight in the bids that are close to the maximal value agent. This is a clear violation of weak payoff monotonicity, which as \citet{Andreoni-Che-Kim-2007-GEB} argue, may have origin in spiteful behavior of the low value agents.\footnote{There is a commonly accepted folk wisdom within experimental economics literature that supports the idea that private rather than common information of values may be beneficial for market outcomes \citep[see][]{smith1994economics}. The general justification is that when more information is available about others' valuations, individuals may strive to deviate from the single-shot Nash equilibrium in order to capture more economic rents. Our theory does not require nor utilize this type of behavior to justify the differences in predicted plausible equilibria between incomplete and complete information. In this particular instance, the ``spiteful behavior'' noted in the complete information treatment of \citet{Andreoni-Che-Kim-2007-GEB} is not present in the full-support incomplete information treatment, which makes it difficult to reconcile with any model of other regarding preferences. Thus, at least in this game, other regarding preferences play a role only when individual incentives for truthful revelation are negligible.}
Finally, a simple behavioral regularity as rounding to multiples of five, can easily induce violations of weak payoff monotonicity \cite[such patterns are present in the auction data of][for instance]{Andreoni-Che-Kim-2007-GEB, Brown-Velez-2017-OAU, Li-AER-17}.

In order to evaluate the positive content of empirical equilibrium analysis, it is necessary to understand the consequences for our analysis of these and other possible violations of weak payoff monotonicity. One avenue is to reconsider our construction and restart from a more basic principle than weak payoff monotonicity. Observe that this property can be stated in its contrapositive form as follows: If between two actions, say $a$ and $b$, an agent's expected utility of $a$ given what the other are doing is greater than or equal to that of $b$, then the frequency with which the agent plays $a$ should be no less than the frequency with which the agent plays $b$. Stated in this form this property can be naturally weakened as follows. One can require the existence of some constant $\alpha\in(0,1)$ such that for any two actions available to an agent, say $a$ and $b$, if the expected utility of $a$ given what the other are doing is greater than or equal than that of $b$, then the frequency with which the agent plays $a$ should be no less than $\alpha$ times the frequency with which the agent plays $b$. One can determine that all our results follow through if we take as basis for plausibility this weaker property. It is  interesting in itself to see that such a weak property still provides empirical restrictions on data. Moreover, the main message of empirical equilibrium analysis in other applications, like full implementation, is also preserved under this generalization \citep{Velez-Brown-2018-EE}.

We prefer to maintain the analysis based on weak payoff monotonicity because it strikes a balance between the regularity it provides while being challenged only by phenomena that (i) do not seem universally relevant, and (ii) seem to induce only continuous violations of this principle. Agents may round their bids, may be attracted by labels attached to certain actions, may exhibit other regarding preferences in certain contexts, and so on. At the end, what matters for empirical equilibrium analysis is that these features of behavior will be part of a bigger scheme in which agents are to a significant extent trying to hit their best payoffs given what the other agents are doing. By analyzing what happens when these less well understood effects are absent, we obtain a powerful benchmark producing policy relevant comparative statics.

%

Finally, an emerging empirical literature concerning strategy-proof mechanisms presents evidence in line with the predictions of empirical equilibrium. In empirical data, in which payoff types are not observable, it is of course challenging to determine what a deviation from truthful behavior is. However, in some instances the researcher is able to identify dominated actions, as an agent refusing to apply for financial support when this does not influence her acceptance to an academic position \citep{Hassidim-et-al-2016} or by means of ex-post surveys \citep{REESJONES-2017-GEB}. The common finding is that these types of reports are observed with positive probability. However, in line with our results, they are more common among the agents for whom they are less likely to be consequential \citep{Hassidim-et-al-2016,REESJONES-2017-GEB,Artemov-Che-He-2017,Chen-Pereyra-2018,Shorrer-Sovago-2019}.

\section{Robust mechanism design and revelation principle}\label{Sec:robustimp}

One can draw an informative parallel between our results and the robust full implementation of scfs \citep{Bergemann-Morris-2005-Eca}. This literature articulates the idea that the designer should look for mechanisms that operate well independently of informational assumptions. Of course one's judgement about this depends on the prediction that one uses. Here are the news if one considers the Nash equilibrium prediction.\footnote{One can even go further and require this type of robustness for all realizations of agents' types for type spaces with no rational expectations a la \citet{Bergemann-Morris-2005-Eca}. In a private values model without imposing common prior discipline, very little can be done \citep{Bergemann-Morris-2011-GEB,Adachi-2014-GEB}. On the other hand, if one aims at obtaining the right outcomes at least when agents consider themselves mutually possible, which covers each possible realization in each common prior payoff-type space, the mechanisms characterized in Theorem~\ref{Thm:sec-impl} still do the job \citep{Adachi-2014-GEB}.}

\begin{theorem}\rm\label{Thm:sec-impl}Let $g$ be an scf. The following are equivalent.
\begin{enumerate}
\item There is a finite mechanism $(M,\varphi)$ such that for each possible common prior~$p$, each Bayesian Nash equilibrium $\sigma$ of $(M,\varphi,p)$, each possible  $\theta\in \Theta$ in the support of $p$, and each message $m$ in the support of $\sigma(\cdot|\theta)$, $\varphi(m)=g(\theta)$.

\item  (i) $g$ is strategy-proof and non-bossy in welfare-outcome, and  (ii) $g$ satisfies the outcome rectangular property, i.e., for each pair of payoff types $\{\theta,\tau\}\subseteq \Theta$, if for each $i\in N$, $g(\theta_i,\tau_{-i})=g(\tau)$, then $g(\theta)=g(\tau)$.
\end{enumerate}
\end{theorem}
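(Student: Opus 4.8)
The plan is to treat the two directions separately, using the canonical direct revelation mechanism $(\Theta,g)$ for sufficiency and a revelation-principle reduction for necessity. The statement is close to the secure-implementation characterization of \citet{Saijo-et-al-2007-TE}, extended to common-prior Bayesian environments by \citet{Adachi-2014-GEB}, once one notes that the conjunction of our non-bossiness in welfare-outcome and our outcome rectangular property is exactly their rectangular property: given strategy-proofness, a report profile $\tau$ can be a Bayesian Nash equilibrium at true profile $\theta$ only if $u_i(g(\tau_{-i},\theta_i)|\theta_i)=u_i(g(\tau)|\theta_i)$ for every $i$; non-bossiness in welfare-outcome then upgrades this to $g(\tau_{-i},\theta_i)=g(\tau)$ for every $i$, and the outcome rectangular property closes the loop to $g(\theta)=g(\tau)$. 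The converse decomposition is an equally short check.

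For $2\Rightarrow1$ I would take $(M,\varphi)$ to be the finite mechanism $(\Theta,g)$. Fix a common prior $p$, a Bayesian Nash equilibrium $\sigma$, a profile $\theta$ in the support of $p$, and a profile $\tau$ in the support of $\sigma(\cdot|\theta)$; the goal is $g(\theta)=g(\tau)$. For each agent $i$, the report $\tau_i$ is a best response at type $\theta_i$ against $\sigma_{-i}$, while by strategy-proofness the truthful report $\theta_i$ weakly dominates every report at every message profile of the others. Summing the pointwise inequalities against $\sigma_{-i}$ shows $\theta_i$ does at least as well as $\tau_i$ interim, and since $\tau_i$ is a best response the two reports tie interim; because all the summed terms carry the same weak inequality, they must be equalities term by term, in particular against the true $\theta_{-i}$ (in the conditional support since $\theta\in\mathrm{supp}\,p$) and the message profile $\tau_{-i}$ (in the support of $\sigma_{-i}(\cdot|\theta_{-i})$ since $\tau\in\mathrm{supp}\,\sigma(\cdot|\theta)$). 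Hence $u_i(g(\tau_{-i},\theta_i)|\theta_i)=u_i(g(\tau)|\theta_i)$, so non-bossiness in welfare-outcome gives $g(\tau_{-i},\theta_i)=g(\tau)$. As this holds for every $i$, the outcome rectangular property applied to the pair $\{\theta,\tau\}$ yields $g(\theta)=g(\tau)$. This is the argument behind the easy direction of Theorem~\ref{Th:str-pr}, with the outcome rectangular property taking over the role played there by Lemma~\ref{Lem:interior}.

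For $1\Rightarrow2$ I would argue by contraposition once two preliminary facts are in place. First, statement~1 forces strategy-proofness: applying the ordinary revelation principle to any Bayesian Nash equilibrium of $(M,\varphi,p)$ --- every such equilibrium induces the outcome $g$ by hypothesis --- shows truthtelling is a Bayesian Nash equilibrium of $(\Theta,g,p)$ for every common prior $p$, and letting $p$ range over all full-support priors and reading the interim best-response inequalities off for arbitrary conditional beliefs yields $u_i(g(\theta)|\theta_i)\ge u_i(g(\theta_{-i},\tau_i)|\theta_i)$ for all $\theta,i,\tau_i$ (this is also the content of the cited \citet{Dasguptaetal-1979-RES,Bergemann-Morris-2005-Eca}). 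Second, I would invoke the revelation principle for robust full implementation in this class of environments \citep{Saijo-et-al-2007-TE,Adachi-2014-GEB} to reduce to the case in which the implementing mechanism is the direct one: if statement~1 holds, then $(\Theta,g,p)$ has only truthful-outcome Bayesian Nash equilibria for every common prior, in particular for every degenerate (complete-information) prior. Non-bossiness in welfare-outcome and outcome rectangularity then each follow from a one-line construction of a bad equilibrium: if non-bossiness fails at $(\theta,i,\tau_i)$, then by strategy-proofness the profile in which $i$ reports $\tau_i$ and everyone else reports truthfully is a Bayesian Nash equilibrium of $(\Theta,g,\theta)$ with outcome $g(\theta_{-i},\tau_i)\ne g(\theta)$; and if outcome rectangularity fails at a pair $\{\theta,\tau\}$ with $g(\theta_i,\tau_{-i})=g(\tau)$ for all $i$ but $g(\theta)\ne g(\tau)$, then by strategy-proofness the constant report profile $\tau$ is a Bayesian Nash equilibrium of $(\Theta,g,\theta)$ with outcome $g(\tau)\ne g(\theta)$. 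Either way statement~1 is contradicted.

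The main obstacle is this reduction to the direct mechanism in the necessity direction. Statement~1 only posits the existence of \emph{some} finite mechanism with the robust full-implementation property, and the ordinary revelation principle transfers a single equilibrium rather than the whole equilibrium correspondence; one therefore genuinely needs the revelation principle for secure/robust full implementation of \citet{Saijo-et-al-2007-TE} and \citet{Adachi-2014-GEB}. A self-contained alternative would have to construct, directly inside the abstract game $(M,\varphi,\theta)$, a Bayesian Nash equilibrium that mixes a message inducing $g(\theta)$ with a message inducing the ``bossy'' or ``non-rectangular'' alternative, which is essentially re-deriving that revelation principle and is the only place where the proof is not routine.
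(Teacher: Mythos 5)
Your $2\Rightarrow1$ direction is exactly the paper's argument: take $(M,\varphi)=(\Theta,g)$, combine the interim best-response condition with the pointwise dominance given by strategy-proofness to get term-by-term equality, extract $u_i(g(\tau_{-i},\theta_i)|\theta_i)=u_i(g(\tau)|\theta_i)$ at the positively weighted profile $(\theta_{-i},\tau_{-i})$, upgrade to $g(\tau_{-i},\theta_i)=g(\tau)$ by non-bossiness in welfare-outcome, and close with the outcome rectangular property. No issues there, and your observation that rectangularity replaces the role of Lemma~\ref{Lem:interior} in Theorem~\ref{Th:str-pr} is the right way to see it.

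The gap in $1\Rightarrow2$ is exactly where you flagged it, and citing \citet{Saijo-et-al-2007-TE} and \citet{Adachi-2014-GEB} does not close it: as the paper notes in Section~\ref{Sec:robustimp}, those results are stated for pure-strategy equilibria and for type spaces in which payoff types can be cloned, whereas statement 1 here quantifies over mixed Bayesian Nash equilibria on a payoff-type space, and the purification argument of \citet[Sec.~6.3]{Bergemann-Morris-2011-GEB} does not transfer their revelation principle to this setting. The paper therefore never reduces to the direct mechanism. Instead it first invokes \citet[Proposition 3]{Saijo-et-al-2007-TE} only as a combinatorial fact, folding non-bossiness in welfare-outcome and outcome rectangularity into the single condition that $u_i(g(\theta')|\theta_i)=u_i(g(\theta'_{-i},\theta_i)|\theta_i)$ for all $i$ implies $g(\theta)=g(\theta')$. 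For a candidate violating pair $\{\theta,\theta'\}$ it then runs the \emph{abstract} mechanism on the prior that is uniform on $\{(\theta'_{-i},\mu_i):i\in N,\ \mu_i\in\{\theta_i,\theta'_i\}\}$: under this prior an agent who draws $\theta_i$ knows the others have type $\theta'_{-i}$, so statement 1 pins down $\varphi(\hat m_{-i},m_i)=g(\theta'_{-i},\theta_i)$ and $\varphi(\hat m_{-i},m'_i)=g(\theta')$ for messages in the relevant supports; the hypothesized welfare equalities then make $\sigma(\cdot|\theta')$ a profile of mutual best responses for the true types $\theta$, i.e., a Nash equilibrium of $(M,\varphi,\theta)$ with outcome $g(\theta')$, and statement 1 applied to the degenerate prior at $\theta$ forces $g(\theta)=g(\theta')$. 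Your two one-line bad-equilibrium constructions in $(\Theta,g,\theta)$ are correct, but they only become available after a reduction to the direct mechanism that your proposal leaves unproved; the prior-construction above is the missing ingredient.
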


A parallel result to Theorem~\ref{Thm:sec-impl} is due to \citet{Saijo-et-al-2007-TE} ($1\Rightarrow2$) and \citet{Adachi-2014-GEB} ($2\Rightarrow1$) in an environment in which they restrict to pure-strategy equilibria and they consider implementation for type spaces larger than our payoff-type space. Our statement includes mixed-strategy equilibria and does not make any requirement for type spaces in which payoff types can be ``cloned.'' Thus, \citet{Saijo-et-al-2007-TE} and \citet{Adachi-2014-GEB}'s results do not trivially imply Theorem~\ref{Thm:sec-impl} by means of \citet[Sec.~6.3,][]{Bergemann-Morris-2011-GEB}'s purification argument. The proof of Theorem~\ref{Thm:sec-impl} can be completed by adapting the arguments in these papers, however. We include it in an online Appendix.

Theorem~\ref{Thm:sec-impl} allows us to make a precise comparison of Theorems~\ref{Th:str-pr} and~\ref{Thm:Interior} with the literature on robust implementation.  As mentioned in the introduction, the conditions in Theorem~\ref{Thm:sec-impl} are restrictive \citep[c.f.][]{Saijo-et-al-2007-TE,Bochet-Sakai-2010-GEB,Fujinaka-Wakayama-2011-ET}. The outcome rectangular property is responsible for large part of these restrictions (Table~\ref{Tab:Secure-implementation}). Thus, the aim of designing mechanisms that produce only the desired outcomes, in all Nash equilibria for all information structures, may be unnecessarily pessimistic. None of the mechanisms in Table~\ref{Tab:Secure-implementation} pass the test. However, if one already believes that a Nash equilibrium will be a good prediction when the mechanism is operated, it is enough to be concerned only with the Nash equilibria that is plausible will be observed. By Theorem~\ref{Th:str-pr}, TTC, Uniform rule, and median voting pass the more realistic test for all common prior type spaces. By Theorem \ref{Thm:Interior}, the second-price auction, Pivotal mechanism, and SPDA pass the test for all full-support common prior type spaces.

\begin{table}[t]\footnotesize
  \centering
  \begin{tabular}{lcccc}
  \hline
scf&\multicolumn{1}{m{1.5cm}}{Strategy proofness}& \multicolumn{1}{m{1.5cm}}{Essentially unique dominant strategies}&\multicolumn{1}{m{1.5cm}}{Non-bossiness in welfare-outcome}&\multicolumn{1}{m{1.5cm}}{outcome rectangular property}
  \\\hline
TTC & $+$ & $+$&$+$&$-$\\
Uniform rule & $+$ & $+$&$+$&$-$
\\
Median voting & $+$ & $+$&$+$&$-$
\\
Second price auction & $+$& $+$&$-$&$-$
\\
Pivotal & $+$& $+$&$-$&$-$
\\
SPDA & $+$& $+$&$-$&$-$
\\\hline
  \end{tabular}
    \caption{Strategy-proof scfs and the outcome rectangular property; $+$ indicates that the property labeling the column is satisfied by the scf, and $-$ the opposite. These statements refer to the usual preference spaces in which these scfs are defined.}\label{Tab:Secure-implementation}
\end{table}

%
%

It is worth noting that statement 1 in Theorem~\ref{Thm:sec-impl} is quantified over all finite mechanisms, while statement 1 in Theorem~\ref{Th:str-pr} only refers to the direct revelation game of the scf. It turns out that whenever statement 1 in Theorem~\ref{Thm:sec-impl} is satisfied by some mechanism for an scf, it is also satisfied by the scf's direct revelation mechanism \citep{Saijo-et-al-2007-TE}. This means that a ``revelation principle'' holds for this type of implementation.

It is not clear that a revelation principle holds when empirical equilibrium is one's prediction in these games. That is, we do not know whether there is a strategy-proof scf that violates non-bossiness in welfare-outcome for which there is a mechanism that has the properties in statement 1 of Theorem~\ref{Th:str-pr}. The issue is very interesting and subtle.

It is known that the restriction to direct revelation mechanisms is not without loss of generality for full implementation. That is, dominant strategy full implementation may require richer message spaces than the payoff-type spaces \citep{Dasguptaetal-1979-RES,Repullo-1985-RES}. Strikingly, \citet{Repullo-1985-RES} constructs a finite social choice environment that admits a strategy-proof social choice function whose direct revelation game for certain type has a dominant strategy equilibrium that Pareto dominates the outcome selected by the scf for that type. Moreover, the social choice environment in this example also admits a mechanism that implements in dominant strategies the social choice function.

By Lemma~\ref{Lem:interior} we know that a dominant strategy profile in a game will always be observed with positive probability in each empirical equilibrium of the game.\footnote{Observe also that by Theorem~\ref{Th:str-pr}, \citet{Repullo-1985-RES}'s scf necessarily violates non-bossiness in welfare-outcome.} Thus, \citet{Repullo-1985-RES}'s concern that undesirable outcomes ---in this case dominant strategy equilibrium outcomes--- of a direct revelation game for a strategy-proof scf may be empirically plausible, is well founded. As \citet{Repullo-1985-RES} proves, it is possible to enlarge the message spaces and tighten the incentives for the selection of a particular outcome in a way that the desired outcome is the only dominant strategy outcome. It turns out that this type of message space enlargement, i.e., those that retain the existence of dominant strategies, will not resolve the issue for empirical equilibrium implementation.

\begin{theorem}[Revelation principle for dominant strategy finite mechanisms]\label{Thm:rev-DS}\rm Let $g$ be an scf. The following are equivalent.
\begin{enumerate}
\item There is a finite mechanism $(M,\varphi)$ for which each agent type has at least a weakly dominant action, and such that for each possible common prior $p$, each empirical equilibrium $\sigma$ of $(M,\varphi,p)$, each possible type $\theta\in \Theta$ in the support of $p$, and each message $m$ in the support of $\sigma(\cdot|\theta)$, $\varphi(m)=g(\theta)$.

\item For each common prior $p$ and each empirical equilibrium of $(\Theta,g,p)$, say $\sigma$, we have that for each pair $\{\theta,\tau\}\subseteq \Theta$ where $\theta$ is in the support of $p$ and $\tau$ is in the support of $\sigma(\cdot|\theta)$, $g(\theta)=g(\tau)$.

\item  $g$ is strategy-proof and non-bossy in welfare-outcome.
\end{enumerate}
\end{theorem}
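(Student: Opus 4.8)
The equivalence $2\Leftrightarrow 3$ is exactly Theorem~\ref{Th:str-pr}, so nothing new is needed there; I would simply cite it. The substantive content is the equivalence of statement~1 with the other two, and the natural route is to prove $3\Rightarrow 1$ and $1\Rightarrow 2$, using statement~2/3 as the bridge.

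For $3\Rightarrow 1$: if $g$ is strategy-proof and non-bossy in welfare-outcome, take $(M,\varphi)$ to be the direct revelation mechanism $(\Theta,g)$ itself. Strategy-proofness says each agent type $\theta_i$ has a weakly dominant action, namely truthful reporting, so the message-space hypothesis in statement~1 is met. The required implementation conclusion is then literally statement~2 of the theorem, which holds by Theorem~\ref{Th:str-pr}. This direction is essentially immediate.

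For $1\Rightarrow 2$: suppose $(M,\varphi)$ is a finite mechanism in which every agent type has a weakly dominant action and which empirically-equilibrium-implements $g$ in the sense of statement~1. I want to conclude that the \emph{direct revelation} game of $g$ also has the property in statement~2. The key tool is Lemma~\ref{Lem:interior}: fix a common prior $p$; for each agent $i$ and each type $\theta_i$ in the support of $p$, pick a weakly dominant action $m_i^*(\theta_i)\in M_i$. By Lemma~\ref{Lem:interior}, in \emph{every} empirical equilibrium $\sigma$ of $(M,\varphi,p)$, the action $m_i^*(\theta_i)$ lies in the support of $\sigma_i(\cdot|\theta_i)$. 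Now consider a degenerate prior concentrated on a single $\theta\in\Theta$ (complete information). The profile of weakly dominant actions $m^*(\theta)\equiv(m_i^*(\theta_i))_{i\in N}$ is a pure Nash equilibrium of $(M,\varphi,\theta)$, and it is in fact an empirical equilibrium (a strict-or-weakly-dominant pure profile is approachable by weakly payoff monotone distributions — indeed it is a limiting logistic/proper-type equilibrium, as noted after the definition of empirical equilibrium). Applying statement~1 to this empirical equilibrium gives $\varphi(m^*(\theta))=g(\theta)$ for every $\theta\in\Theta$. So $\varphi\circ m^*$ reproduces $g$ on the whole type space, and for each $i$, $m_i^*(\theta_i)$ remains weakly dominant regardless of $\theta_{-i}$; this already forces $g$ to be strategy-proof. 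The remaining and more delicate task is non-bossiness in welfare-outcome: suppose for contradiction it fails, so for some $\theta$, some $i$, and some $\tau_i$, agent $i$ with type $\theta_i$ is indifferent between $g(\theta)$ and $g(\theta_{-i},\tau_i)$ but these outcomes differ. I would then mimic the contradiction construction in the proof of Theorem~\ref{Th:str-pr} (the $\kappa^{\varepsilon,r,\eta,\lambda}$ operator argument) \emph{inside the mechanism $(M,\varphi)$}, using $m_i^*(\tau_i)$ as the ``deviating'' action and $m_j^*(\theta_j)$ for the other agents, to produce an empirical equilibrium of the complete-information game $(M,\varphi,\theta)$ in which $m_i^*(\tau_i)$ is played with positive probability; since $\varphi(\theta_{-i},\tau_i$-type$)=g(\theta_{-i},\tau_i)\neq g(\theta)=\varphi(m^*(\theta))$, this empirical equilibrium violates statement~1, the desired contradiction.

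The main obstacle is this last step: the construction in Theorem~\ref{Th:str-pr}'s proof was carried out for the direct revelation game, where the relevant actions are \emph{the} reports $\theta_i$ and $\tau_i$ and the others' dominant actions are assumed (for the sketch) unique. In the general mechanism $(M,\varphi)$ dominant actions need not be unique, the message space is richer, and one must be careful that the operator's fixed points put vanishing weight on \emph{all} non-dominant messages of agents $j\neq i$ simultaneously, so that agent $i$'s payoffs from $m_i^*(\theta_i)$ and $m_i^*(\tau_i)$ become arbitrarily close (using indifference at the profile of others' dominant actions together with continuity). The honest way to handle this is to verify that the proof of the converse direction of Theorem~\ref{Th:str-pr} only used that (a) the others have weakly dominant actions and (b) the pivotal agent has two actions yielding the same expected payoff against the others' dominant profile but different outcomes — properties that hold verbatim here with $m_i^*(\theta_i),m_i^*(\tau_i)$ in place of $\theta_i,\tau_i$. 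I would therefore state the needed generalization once and refer back to it, rather than reproducing the operator argument; the rest is routine.
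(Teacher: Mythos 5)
Your proposal follows essentially the same route as the paper: the paper proves the cycle $3\Rightarrow2\Rightarrow1\Rightarrow3$, where $3\Rightarrow2$ is the content of Theorem~\ref{Th:str-pr}, $2\Rightarrow1$ is your observation that the direct revelation mechanism itself serves as the witnessing dominant-strategy mechanism, and $1\Rightarrow3$ is carried out by exactly the contradiction you describe---running the $\kappa^{\varepsilon,r,\eta,\lambda}$ operator construction inside the general mechanism $(M,\varphi)$ with $m_1(\tau_1)$ as the deviating message and the others mixing over their weakly dominant actions. Your identification of the two ingredients that make the construction port over (others have weakly dominant actions; the pivotal agent has two messages with equal payoff against the others' dominant profile but different outcomes) is precisely what the paper's argument uses, and the paper does write the operator argument directly for $(M,\varphi)$ rather than only for $(\Theta,g)$, so your ``generalize and refer back'' plan is sound.

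One step in your writeup is wrong as stated, though the conclusion you draw from it survives. You assert that the pure profile of weakly dominant actions $m^*(\theta)$ ``is in fact an empirical equilibrium.'' By Lemma~\ref{Lem:interior}, \emph{every} weakly dominant action of every agent type must lie in the support of \emph{every} empirical equilibrium; so whenever some agent type has two distinct weakly dominant messages (which statement~1 permits), no pure profile can be an empirical equilibrium, and the limiting logistic equilibrium places equal (not unit) probability on payoff-equivalent dominant messages. The fix is the one the paper uses: take \emph{any} empirical equilibrium $\sigma$ of $(M,\varphi,p)$ (existence is guaranteed for finite games), note by Lemma~\ref{Lem:interior} that $m^*(\theta)$ lies in the support of $\sigma(\cdot|\theta)$, and apply statement~1 to conclude $\varphi(m^*(\theta))=g(\theta)$. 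The same caution applies to the target profile in your contradiction: it must have each $j\neq i$ mixing over \emph{all} of $j$'s weakly dominant messages and agent $i$ mixing over all of her best responses to that mixture, exactly as in the paper's $\sigma^*$; a profile concentrating the others on a single dominant message cannot be an empirical equilibrium when dominant messages are not unique.
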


Theorem~\ref{Thm:rev-DS} implies that it is impossible to obtain robust implementation in empirical equilibrium of a social choice function that violates non-bossiness in welfare-outcome by a dominant strategies mechanism. It is worth noting that enlarging the message space on the direct revelation game of a strategy-proof  scf that violates non-bossiness in welfare-outcome may have a meaningful effect on the performance of the mechanism, even when one preserves the existence of dominant strategies.

\begin{example}\label{Ex:ext-ds}\rm Consider an environment with two agents $N\equiv\{1,2\}$ whose payoff-type spaces are $\Theta_1\equiv\{\theta_1\}$ and $\Theta_2\equiv\{\theta_2,\theta_2'\}$. There are two possible outcomes $\{a,b\}$; and $u_1(a|\theta_1)>u_1(b|\theta_1)$, $u_2(a|\theta_2)=u_1(b|\theta_2)$, and $u_2(a|\theta_2')<u_2(b|\theta_2')$. Suppose that a social planner desires to implement the efficient dictatorship in which agent $2$ gets her top choice. One can easily see that for any common prior $p$, for each empirical equilibrium of $(\Theta,g,p)$, say $\sigma$, agent $2$ with payoff type $\theta_2$ uniformly randomizes in $\Theta_2$. Thus, in each empirical equilibrium of $(\Theta,g,p)$, agent $2$ always achieves her top choice and agent $1$ receives her top choice with $1/2$ probability when this does not conflict with agent $2$'s preferences. Suppose now that the social planner uses mechanism $(M,\varphi)$ defined as follows: $M_1\equiv\{\theta_1\}$, $M_2\equiv\{\theta_2',m_2^1,....,m_2^k\}$ where $k\in \N$, $\varphi(\theta_1,\theta_2')=b$, and for each $l=1,...,k$, $\varphi(\theta_1,m_2^l)=a$. One can see easily that in each empirical equilibrium of $(M,\varphi,p)$, agent $2$ always achieves her top choice and agent $1$ receives her top choice with $k/(k+1)$ probability when this does not conflict with agent $2$'s preferences.$\qed$
\end{example}

Finally, it is well known that the restriction to social choice \emph{functions} is not without loss of generality in robust implementation. Indeed, \citet[Example 2]{Bergemann-Morris-2005-Eca} show that ``partial'' robust implementation can be achieved for a ``social choice correspondence'' that does not posses any strategy-proof single-valued selection. Their argument can be adapted to account for mixed strategies, which are essential in our analysis, and to show that the same phenomenon happens in our environment  (see Example~\ref{Ex:BMorris} in our Online Appendix).

\section{Conclusion}\label{Sec:conclusion}

We have presented theoretical and empirical evidence that strategy-proof scfs are not all the same with respect to how plausible it is that behavior in the direct revelation game associated with the scf can approximate a suboptimal equilibrium. Our analysis is based on empirical equilibrium, a refinement of Nash equilibrium that we introduce. It selects all the Nash equilibria that are not rejected as implausible by some model that is disciplined by weak payoff monotonicity. We draw two main conclusions under the hypothesis that observable behavior satisfies this property. First, behavior from the operation of a strategy-proof and non-bossy in welfare-outcome scf will never approximate a sub-optimal Nash equilibrium. Second, if the mechanism violates the non-bossiness condition but has essentially unique dominant strategies, then behavior can approximate a sub-optimal equilibrium only if information is not interior. These predictions are supported by experimental data on multiple mechanisms. The weak payoff monotonicity hypothesis fares well in data, but violations of it can be spotted in particular environments. These violations do not hinder the main conclusions of our study, however.

\section*{Appendix}

\begin{proof}[Proof of Lemma~\ref{Lem:interior}]
Let $\Gamma\equiv (M,\varphi,p)$ and $\sigma\in N(\Gamma)$ be as in the statement of the lemma. Consider a sequence of weakly payoff monotone  distributions for $\Gamma$, $\{\sigma^\lambda\}_{\lambda\in\N}$, such that for each $i\in N$ and each $\theta_i\in T_i$, as $\lambda\rightarrow\infty$, $\sigma^\lambda(\cdot|\theta_i)\rightarrow\sigma(\cdot|\theta_i)$. Let $\lambda\in \N$ and $m_{-i}\in M_{-i}$. Since $m_i$ is a weakly dominant action for agent $i$ with type $\theta_i$ in $(M,\varphi)$, for each $r_i\in M_i$, $u_i(\varphi(m_{-i},m_i)|\theta_i) \geq u_i(\varphi(m_{-i},r_i)|\theta_i)$. Thus, $U_{\varphi}(\sigma^\lambda_{-i},\delta_{m_i}|p,\theta_i)\geq U_{\varphi}(\sigma^\lambda_{-i},\delta_{r_i}|p,\theta_i)$. Since $\sigma$ is weakly payoff monotone for $\Gamma$, we have that for each $r_i\in M_i$, $\sigma^\lambda_i(m_i|\theta_i)\geq \sigma^\lambda_i(r_i|\theta_i)$. Convergence implies that $\sigma_i(m_i|\theta_i)\geq \sigma_i(r_i|\theta_i)$. Thus, $m_i$ is in the support of $\sigma_i(\cdot|\theta_i)$.
\end{proof}

\begin{proof}[Proof of Theorems~\ref{Th:str-pr} and~\ref{Thm:rev-DS}]  We prove Theorem~\ref{Thm:rev-DS}, which implies Theorem~\ref{Th:str-pr}.  We first prove that statement~3 in the theorem implies statement~2. Suppose that~$g$ is strategy-proof and non-bossy in welfa\-re-outcome. Let~$p$ be a common prior and~$\sigma$ an empirical equilibrium of~$(\Theta,g,p)$. Let $\theta\in \Theta$ be in the support of~$p$. Thus, for each~$i\in N$, $p(\theta_{-i}|\theta_i)>0$. Let $\tau\in\Theta$ be in the support of $\sigma(\cdot|\theta)$, i.e., $\tau$ is a report that is observed with positive probability when the true types are $\theta$. Let $i\in N$. Since $\sigma\in N(\Theta,g,p)$, we have that
\[U_g(\sigma_{-i},\delta_{\tau_i}|p,\theta_i) \geq U_g(\sigma_{-i},\delta_{\theta_i}|p,\theta_i).\]
Since $g$ is strategy-proof, the integrand of the expression on the right dominates point-wise the integrand of the expression on the left. Thus, the integrands are equal on the support of the common integrating measure. Notice that since  $p(\theta_{-i}|\theta_i)>0$ and $\tau$ is in the support of $\sigma(\cdot|\theta)$, agent $i$ assigns positive probability that the other agents profile of reports is $\tau_{-i}$. Thus, $u_i(g(\tau)|\theta_i)=u_i(g(\tau_{-i},\theta_i)|\theta_i)$. Since $g$ is non-bossy in welfare-outcome,
\begin{equation}g(\tau)=g(\tau_{-i},\theta_i).\label{Eq:Thestr-proof1}\end{equation}
By Lemma~\ref{Lem:interior}, $\theta_i$ is in the support of $\sigma_i(\cdot|\theta_i)$. Thus,  $(\tau_{-i},\theta_i)$ is in the support of $\sigma(\cdot|\theta)$. Thus, the recursive argument shows that $g(\tau)=g(\theta)$.

We now prove that statement 2 implies statement 1. Since each empirical equilibrium is a Bayesian Nash equilibrium, statement~2 implies that for each common prior $p$ there is a Bayesian Nash equilibrium of $(\Theta,g,p)$ that obtains for each $\theta\in \Theta$, $g(\theta)$ with probability one. It is well known that this implies $g$ is strategy-proof \citep{Dasguptaetal-1979-RES,Bergemann-Morris-2005-Eca}.\footnote{\label{Footnote:DHM}This can be easily seen by analyzing for each $\theta\in\Theta$ and $\tau_i\in\Theta_i$, the common prior $p=(1/2)\delta_\theta+(1/2)\delta_{(\theta_{-i},\tau_i)}$. See Theorem~\ref{Thm:Interior} for the explicit proof of a slightly stronger result where this is obtained for interior common priors.} Thus, $(\Theta,g)$ is a dominant strategies mechanism that satisfies the conditions in statement 1 of the theorem.

We now prove that statement 1 implies statement 3. Suppose that statement 1 is satisfied. That is, there is a finite mechanism $(M,\varphi)$ for which each agent type has at least a dominant strategy, and such that for each common prior $p$, each empirical equilibrium $\sigma$ of $(M,\varphi,p)$, each possible type $\theta\in\Theta$ in the support of $p$, and each message $m$ in the support of $\sigma(\cdot|\theta)$, $\varphi(m)=g(\theta)$.

We prove that $g$ is strategy-proof. For each $i\in N$ and $\theta_i\in\Theta_i$, let $m_i(\theta_i)$ be a weakly dominant action for $i$ with type $\theta_i$ in $(M,\varphi)$. Let $p$ be a full-support common prior and $\sigma$ an empirical equilibrium of $(M,\varphi,p)$. By Lemma~\ref{Lem:interior}, for each $i\in N$ and $\theta_i\in\Theta_i$, $m_i(\theta_i)$ is in the support of $\sigma(\cdot|\theta_i)$. By statement 1, for each $\theta\in\Theta$, $\varphi(m(\theta))=g(\theta)$ (this means that $(M,\varphi)$ fully implements $g$ in dominant strategy equilibria, which by the usual revelation principle argument, which we spell out next, implies $g$ is strategy-proof). Let $\theta\in\Theta$, $i\in N$ and $\tau_i\in\Theta_i$. Since $m_i(\theta_i)$ is a weakly dominant action for $i$ with type $\theta_i$ in $(M,\varphi)$, we have that $u_i(\varphi(m(\theta)|\theta_i)\geq u_i(\varphi(m_{-i}(\theta_{-i}),m_i(\tau_i))|\theta_i)$. Thus, $u_i(g(\theta)|\theta_i)\geq u_i(g(\theta_{-i},\tau_i)|\theta_i)$. Thus, $g$ is strategy-proof.

We  prove that $g$ is non-bossy in welfare-outcome. Suppose by contradiction that there is $\theta\in\Theta$, $i\in N$, and $\tau_i\in\Theta_i$ such that $u_i(g(\theta)|\theta_i)=u_i(g(\theta_{-i},\tau_i)|\theta_i)$ and $g(\theta)\neq g(\theta_{-i},\tau_i)$. Suppose without loss of generality that this agent is $i=1$. Let $a\equiv g(\theta)$ and $b\equiv g(\theta_{-1},\tau_1)$. Again, for each $i\in N$ and $\theta_i\in\Theta_i$, let $m_i(\theta_i)$ be a weakly dominant action for $i$ with type $\theta_i$ in $(M,\varphi)$.  We claim that $m_1(\tau_1)$ is a best response to $m_{-1}(\theta_{-1})$ for agent $i$ with type $\theta_1$, i.e., for each $m_1'\in M_1$,
\begin{equation}u_1(\varphi(m_{-1}(\theta_{-1}),m_1(\tau_1))|\theta_1)\geq u_1(\varphi(m_{-1}(\theta_{-1}),m_1')|\theta_1).\label{Eq:poof-T4-eq1}\end{equation}
By Lemma~\ref{Lem:interior}, in each empirical equilibrium of $(M,\varphi,(\theta_{-1},\tau_1))$, $(m_{-1}(\theta_{-1}),m_1(\tau_1))$ is played with positive probability and in each empirical equilibrium of $(M,\varphi,\theta)$, $(m_{-1}(\theta_{-1}),m_1(\theta_1))$ is played with positive probability. Since $(M,\varphi)$ satisfies statement 1, $\varphi(m_{-1}(\theta_{-1}),m_1(\tau_1))=b$ and $\varphi(m_{-1}(\theta_{-1}),m_1(\theta))=a$. Since $m_1(\theta_1)$ is a dominant strategy for agent $1$ with type $\theta_1$, we have that for each $m_1'\in M_1$,
$u_1(\varphi(m_{-1}(\theta_{-1}),m_1(\tau_1))|\theta_1)=u_1(\varphi(m_{-1}(\theta_{-1}),m_1(\theta_1))|\theta_1)\geq u_1(\varphi(m_{-1}(\theta_{-1}),m_1')|\theta_1)$. This is (\ref{Eq:poof-T4-eq1}).

Consider the complete information game $(M,\varphi,\theta)$. Let $\sigma^*$ be the profile of strategies in $(M,\varphi,\theta)$ defined as follows. For each agent $j\neq 1$, $\sigma_j^*$ uniformly randomizes among $j$'s weakly dominant actions; agent $1$ uniformly randomizes among her best responses to $\sigma^*_{-1}$. (Recall that in a complete information game we do not condition strategies on agents' types, i.e., $\sigma^*_i$ is the strategy of agent $i$ with type $\theta_i$.) Clearly, $\sigma^*$ is a Bayesian Nash equilibrium of $(M,\varphi,\theta)$. Since $m_{-1}(\theta_{-1})$ in (\ref{Eq:poof-T4-eq1}) is an arbitrary profile of weakly dominant strategies for agents $N\setminus\{i\}$ with type $\theta_{-1}$ in $(M,\varphi)$,  we have that for agent $i$ with type $\theta_1$,  $m_1(\tau_1)$ is a best response to $\sigma^*_{-1}$ in $(M,\varphi,\theta)$, i.e., for each $m_1'\in M_1$,
\[U_\varphi(\sigma^*_{-i},\delta_{m_1(\tau_1)}|\delta_{\theta_{-1}},\theta_i)\geq U_\varphi(\sigma^*_{-i},\delta_{m'_1}|\delta_{\theta_{-1}},\theta_i).\] Thus, $(m_{-1}(\theta_{-1}),m_1(\tau_1))$ is in the support of $\sigma^*$. Thus, $\sigma^*$ is a Bayesian Nash equilibrium of $(M,\varphi,\theta)$ that obtains with positive probability outcome $b=\varphi(m_{-1}(\theta_{-1}),m_1(\tau_1))$ (when the agents' type is $\theta$, which is the only element in the support of the prior).

The proof concludes by showing that $\sigma^*$ is an empirical equilibrium of $(M,\varphi,\theta)$, which contradicts statement 1 because $b\neq a$. We follow the intuition that we presented in Sec.~\ref{Sec:Results} for the direct revelation mechanism $(\Theta,g)$.

We will make use of the Quantal Response Equilibria \citep{mckelvey:95geb}, which are weakly payoff monotone distributions. A quantal response function for agent $i$ is a continuous function $Q_i:\R^{M_i}\rightarrow \Delta(M_i)$. For each $m_i\in M_i$, $Q_{im_i}(x)$ denotes the value assigned to $m_i$ by $Q_i(x)$. We refer to the list $Q\equiv(Q_i)_{i\in N}$ simply as a quantal response function. Agent~$i$'s quantal response function $Q_i$ is \textit{monotone} if for each $x\in \R^{M_i}$, and each pair $\{m_i,m'_i\}\subseteq M_i$ such that $x_{m_i}>x_{m_i'}$, $Q_{im_i}(x)>Q_{im_i'}(x)$ \citep{Goeree-Holt-Palfrey-2005-EE}. The \textit{logistic} quantal response function with parameter $\lambda\geq0$, denoted by $l^\lambda$, assigns to each $m\in M_i$ and each $x\in\R^{M_i}$ the value,
\begin{equation}l^\lambda_{im}(x)\equiv\frac{e^{\lambda x_m}}{\sum_{t\in M_i}e^{\lambda x_{t}}}.\label{Equation-Logistic-QRE}\end{equation}
It can easily be checked that for each $\lambda\geq 0$, the corresponding logistic quantal response function is continuous and monotone \citep{mckelvey:95geb}. A \textit{quantal response equilibrium} of $\Gamma\equiv(M,\varphi,\theta)$ with respect to quantal response function $Q$ is a fixed point of the composition of $Q$ and the expected payoff operator in $\Gamma$ \citep{mckelvey:95geb}, i.e., a strategy profile for $(M,\varphi,\theta)$, $\sigma\equiv(\sigma_i)_{i\in N}$, such that for each $i\in N$, $\sigma_i=Q_i(U_\varphi(\sigma_{-i},\delta_{m_i}|\delta_{\theta_{-i}},\theta_i)_{m_i\in M_i})$. Brouwer's fixed point theorem guarantees that for each continuous quantal response function there is a quantal response equilibrium associated with it \citep{mckelvey:95geb}. One can easily see that if the quantal response function is monotone, each of its quantal response equilibria are weakly payoff monotone.

For each $j\in N$, $\varepsilon\in(0,1)$, and $\lambda\in\N$ let $n_j\equiv|M_j|$ and  $\kappa^{\varepsilon,\lambda}_j$ the quantal response function that for each $x\in\R^{M_j}$,
\[\kappa^{\varepsilon,\lambda}_j(x)\equiv \varepsilon/n_j+(1-\varepsilon)l^\lambda(x).\]
Since $l^\lambda$ is continuous and monotone, so is $\kappa^{\varepsilon,\lambda}_j$. Fix $\varepsilon>0$, $\delta>0$, and $r\in\N$. By continuity of $\kappa^{\varepsilon,r}_1$ and the expected utility operator, as $\eta\rightarrow0$,
\[\kappa^{\varepsilon,r}_1(U_\varphi((\eta/n_j+(1-\eta)\sigma^*_{j})_{i\in N\setminus\{1\}},\delta_{m_1}|\delta_{\theta_{-1}},\theta_1)_{m_1\in M_1})\rightarrow \kappa^{\varepsilon,r}_1(U_\varphi(\sigma^*_{-1},\delta_{m_1}|,\theta_1)_{m_1\in M_1}).\]
By monotonicity of $\kappa^{\varepsilon,r}_1$, $\kappa^{\varepsilon,r}_1(U_\varphi(\sigma^*_{-1},\delta_{m_1}|,\theta_1)_{m_1\in M_1})$ places maximal probability on the best responses for agent $1$ to $\sigma^*_{-1}$. Thus there is $\eta(\varepsilon,r,\delta)<\delta$ such that for each $m_1^*\in M_1$ that is a best response for agent $1$ to $\sigma^*_{-1}$, the distance between \[\kappa^{\varepsilon,r}_{1m_1^*}(U_\varphi((\eta(\varepsilon,r,\delta)/n_j+(1-\eta(\varepsilon,r,\delta))\sigma^*_{j})_{i\in N\setminus\{1\}},\delta_{m_1}|\delta_{\theta_{-1}},\theta_1)_{m_1\in M_1}))\] and \[\kappa^{\varepsilon,r}_{1m_1(\theta_1)}(U_\varphi((\eta(\varepsilon,r,\delta)/n_j+(1-\eta(\varepsilon,r,\delta))\sigma^*_{j})_{i\in N\setminus\{1\}},\delta_{m_1}|\delta_{\theta_{-1}},\theta_1)_{m_1\in M_1})),\]  is at most $\delta/2$. Fix such a $\eta(\varepsilon,r,\delta)$. Consider a sequence of quantal response equilibria for the sequence of quantal response functions \[\{(\kappa^{\varepsilon,r}_1,\kappa^{\eta(\varepsilon,r,\delta),t}_2,...,\kappa^{\eta(\varepsilon,r,\delta),t}_n)\}_{t\in\N}.\]Let $\{\sigma^t\}_{t\in \N}$ be this sequence. Compactness of the simplex of probabilities implies that there is a convergent subsequence. Without loss of generality we assume then that $\{\sigma^t\}_{t\in \N}$ is convergent and its limit as $t\rightarrow\infty$ is, say $\sigma$. Since each agent places in each action a probability that is at least the minimum between $\varepsilon/n_1$ and  $\min\{\eta(\varepsilon,r,\delta)/n_j:j\in N\setminus\{1\}\}$, $\sigma$ is interior. Now, observe that for each $t\in\N$, each $j\in N\setminus\{1\}$, and each $m_j'\in M_j$,
\[\frac{l^t_{m_j'}(U_\varphi(\sigma^t_{-j},\delta_{m_j}|\delta_{\theta_{-j}},\theta_j)_{m_j\in M_j})}{{l^t_{m_j(\theta_j)}(U_\varphi(\sigma^t_{-j},\delta_{m_j}|\delta_{\theta_{-j}},\theta_j)_{m_j\in M_j})}}=e^{t(U_\varphi(\sigma^t_{-j},\delta_{m_j}|\delta_{\theta_{-j}},\theta_j)-U_\varphi(\sigma^t_{-j},\delta_{m_j(\theta_j)}|
\delta_{\theta_{-j}},\theta_j))}.\]
Suppose that $m_j'$ is not a dominant action for $j$ in $(M,\varphi)$. Since $\sigma_{-j}$ is interior, we have that
\[U_\varphi(\sigma_{-j},\delta_{m_j}|\delta_{\theta_{-j}},\theta_j)-U_\varphi(\sigma_{-j},\delta_{m_j(\theta)}|
\delta_{\theta_{-j}},\theta_j)<0.\]
Since as $t\rightarrow\infty$, $\sigma^t\rightarrow \sigma$, we also have that  as $t\rightarrow\infty$,
\begin{equation}\frac{l^t_{m_j'}(U_\varphi(\sigma^t_{-j},\delta_{m_j}|\delta_{\theta_{-j}},\theta_j)_{m_j\in M_j})}{l^t_{m_j(\theta_j)}(U_\varphi(\sigma^t_{-j},\delta_{m_j}|\delta_{\theta_{-j}},\theta_j)_{m_j\in M_j})}\rightarrow0.\label{Eq:Eq-T4-2}\end{equation}
By monotonicity of $l^t$, $l^t(U_\varphi(\sigma^t_{-j},\delta_{m_j}|,\theta_j)_{m_j\in M_1})$ places maximal probability on the best responses for agent $j$ to $\sigma^t_{-j}$. Thus, it places maximal probability on $m_j(\theta_j)$. Since as $t\rightarrow\infty$, $\sigma^t\rightarrow\sigma$, the expressions in the numerator and denominator of (\ref{Eq:Eq-T4-2}) form convergent sequences. Thus, $\lim_{t\rightarrow\infty}l^t_{m_j(\theta_j)}(U_\varphi(\sigma^t_{-j},\delta_{m_j}|\delta_{\theta_{-j}},\theta_j)_{m_j\in M_j})\geq 1/n_j>0$. By (\ref{Eq:Eq-T4-2}), as $t\rightarrow\infty$, $l^t_{m_j'}(U_\varphi(\sigma^t_{-j},\delta_{m_j}|\delta_{\theta_{-j}},\theta_j)_{m_j\in M_j})\rightarrow0$. Thus, $\sigma_j=\eta(\varepsilon,r,\delta)/n+(1-\eta(\varepsilon,r,\delta))\sigma^*_{j}$.

Now, for agent~$1$, since both parameters in her quantal response function are fixed in the sequence,
\[\sigma_1=\kappa^{\varepsilon,r}_{1}(U_\varphi((\eta(\varepsilon,r,\delta)/n_j+
(1-\eta(\varepsilon,r,\delta))\sigma^*_{j})_{i\in N\setminus\{1\}},\delta_{m_1}|\delta_{\theta_{-1}},\theta_1)_{m_1\in M_1}).\]
Thus, there is $t>r$ such that the max distance, between $\sigma^t$ and $\sigma$ is $\delta/4$. Let $\gamma^{\varepsilon,r,\delta}=\sigma^t$ for such a $t$. By our choice of $\eta(\varepsilon,r,\delta)$, for each $m_1^*\in M_1$ that is a best response to $\sigma^*_{-1}$ for agent~$1$ with type $\theta_1$, the distance between $\kappa^{\varepsilon,r}_{1m_1^*}(U_\varphi(\gamma^r_{-1},\delta_{m_1}|\delta_{\theta_{-1}},\theta_1)_{m_1\in M_1})$ and $\kappa^{\varepsilon,r}_{1m_1(\theta_1)}(U_\varphi(\gamma^r_{-i},\delta_{m_1}|\delta_{\theta_{-1}},\theta_1)_{m_1\in M_1})$ is at most $\delta$.

For each $r\in\N$, let $\varepsilon(r)\equiv1/r$ and $\delta(r)\equiv1/r$. Let $\eta(r)\equiv\eta(\varepsilon(r),r,\delta(r))$ and $\gamma^r\equiv \gamma^{\varepsilon(r),r,\delta(r)}$ be constructed as above. By passing to a subsequence if necessary we can suppose without loss of generality that $\{\gamma^r\}_{r\in\N}$ is convergent. Since $0<\eta(r)<1/r$, we have that as $r\rightarrow\infty$, $\eta(r)\rightarrow0$. Let
$j\neq 1$. By our construction, the maximum distance between $\gamma^r_j$ and $\eta(r)/n+(1-\eta(r))\sigma^*_{j}$ is at most $\delta(r)/4$. Thus, as $r\rightarrow\infty$,  $\gamma^r_j\rightarrow\sigma^*_{j}$.

Let $\mu_1$ be the limit as $r\rightarrow\infty$ of $\gamma^r_1$. For each $m_1^*\in M_1$ that is a best response for $\theta_1$ to $\sigma^*_{-1}$, we have that $|\gamma^r_1(m_1^*)-\gamma^r_1(m_1(\theta_1)|\leq\delta(r)$. Thus, $\mu_1(m_1^*)=\mu_1(m_1(\theta_1))$. Since $\kappa^{\varepsilon,r}_1$ is monotone, $\mu_1(m_1(\theta_1))>0$. Now, observe that for each $r\in\N$, and each $m_1'\in M_1$,
\[\frac{l^r_{m_1'}(U_\varphi(\gamma^r_{-1},\delta_{m_1}|\delta_{\theta_{-1}},\theta_1)_{m_1\in M_1})}{{l^r_{m_1(\theta_1)}(U_\varphi(\gamma^r_{-1},\delta_{m_1}|\delta_{\theta_{-1}},\theta_1)_{m_1\in M_1})}}=e^{r(U_\varphi(\gamma^r_{-1},\delta_{m_1}|\delta_{\theta_{-1}},\theta_1)-U_\varphi(\gamma^r_{-1},\delta_{m_1(\theta_1)}|
\delta_{\theta_{-1}},\theta_1))}.\]
If $m_1'\in M_1$ is not a best response to $\sigma^*_{-1}$,
\[U_\varphi(\sigma_{-1}^*,\delta_{m_1}|\delta_{\theta_{-1}},\theta_1)<U_\varphi(\sigma_{-1}^*,\delta_{m_1(\theta_1)}| \delta_{\theta_{-1}},\theta_1).\]
Thus, $\mu_1(m_1')/\mu_1(m_1(\theta_1))=0$ and $\mu_1(m_1')=0$. Thus, $\mu_1=\sigma^*_1$. Since each $\gamma^r$ is weakly payoff monotone and as $r\rightarrow\infty$, $\gamma^r\rightarrow\sigma^*$, we have that $\sigma^*$ is an empirical equilibrium of $(M,\varphi,\theta)$.
\end{proof}

\begin{proof}[Proof of Theorem~\ref{Thm:Interior}]
Suppose that statement 1 is satisfied. We claim that $g$ is strate\-gy-proof. Our proof of this claim follows \citet[Proposition 3]{Bergemann-Morris-2005-Eca}. We spell out the details because our statement includes mixed strategy equilibria. Let $\theta\in\Theta$, $i\in N$, and $\tau_i\in\Theta_i$. Let $\varepsilon\in(0,1)$. Consider the common prior $p$ that places probability $1/2-\varepsilon/2$ on each element of  $\{\theta,(\theta_{-i},\tau_i)\}$, and places uniform probability on all other payoff types. Thus, $p$ has full-support. Let $\sigma$ be a Bayesian Nash equilibrium of $(\Theta,g,p)$ such that for each $\mu\in\Theta$ and each message in the support of $\sigma(\cdot|\mu)$ produces $g(\mu)$. Thus, the expected value of a report in the support of $\sigma_i(\cdot|\theta_i)$ has an expected value for  type $\theta_i$ that is greater than or equal to the expected value of a report in the support of $\sigma_i(\cdot|\tau_i)$, i.e.,
\[\begin{array}{l}p(\theta_{-i}|\theta_i)u_i(g(\theta)|\theta_i)+\sum_{\mu_{-i}\in\theta_{-i}}p(\mu_{-i}|\theta_i)u_i(g(\mu_{-i},\theta_i)|\theta_i)\geq
\\ p(\theta_{-i}|\theta_i)u_i(g(\theta_{-i},\tau_i)|\theta_i)+\sum_{\mu_{-i}\in\theta_{-i}}p(\mu_{-i}|\theta_i)u_i(g(\mu_{-i},\tau_i)|\theta_i).\end{array}\]

Since as $\varepsilon\rightarrow0$, $p(\theta_{-i}|\theta_i)\rightarrow1$, we have that $u_i(g(\theta)|\theta_i)\geq u_i(g(\theta_{-i},\tau_i)|\theta_i)$. Thus, $g$ is strategy-proof.

We now claim that $g$ has essentially unique dominant strategies. Suppose by contradiction that there are $i\in N$, $\theta\in\Theta$, $\tau_i\in\Theta_i$, such that $u_i(g(\theta)|\theta_i)=u_i(g(\theta_{-i},\tau_{i})|\theta_i)$, $g(\theta)\neq g(\theta_{-i},\tau_{i})$, and for each $\tau_{-i}\in\Theta_{-i}$, $u_i(g(\tau_{-i},\theta_i)|\theta_i)\leq u_i(g(\tau)|\theta_i)$. Let $p$ have full support. Let $\sigma$ be an empirical equilibrium of $(\Theta,g,p)$. Since $g$ is strategy-proof, $\tau_i$ is a weakly dominant action for agent $i$ with type $\theta_i$ in $(\Theta,g)$, and for each $j\in N\setminus\{i\}$, $\theta_j$ is a dominant strategy for agent~$j$ with type $\theta_j$. By Lemma~\ref{Lem:interior}, $\sigma(\cdot|\theta)$ places positive probability on $(\theta_{-i},\tau_i)$. This contradicts statement 1 in the theorem.

Suppose now that $g$ is strategy-proof and has essentially unique dominant strategies. Let $p$ have full support and $\sigma$ be an empirical equilibrium of $(\Theta,g,p)$. Let $\theta\in \Theta$. We prove that $\sigma(\cdot|\theta)$ obtains $g(\theta)$ with probability one. Let $i\in N$. Suppose that $\tau_i$ is in the support of $\sigma_i(\cdot|\theta_i)$. We first prove that for each $\tau_{-i}\in\Theta_{-i}$, $g(\tau_{-i},\theta_i)=g(\tau_{-i},\tau_i)$.  Since $\sigma$ is a Bayesian  Nash equilibrium
\[U_g(\delta_{\tau_i},\sigma_{-i}|p,\theta_i)\geq U_g(\delta_{\theta_i},\sigma_{-i}|p,\theta_i).\]
Since $g$ is strategy-proof, the integrand of the expression on the right dominates point-wise the integrand of the expression on the left. Thus, the integrands are equal on the support of the common integrating measure. Since $p(\tau_{-i}|\theta_i)>0$ and since by Lemma~\ref{Lem:interior} the probability with which $\tau_{-i}$ is realized for $\sigma_{-i}(\cdot|\tau_{-i})$ is positive, we have that
\begin{equation}u_i(g(\tau_{-i},\tau_i)|\theta_i)=u_i(g(\tau_{-i},\theta_i)|\theta_i).\label{Eq:proof-pro-int1}\end{equation}
We claim that $g(\tau_{-i},\tau_i)=g(\tau_{-i},\theta_i)$. Suppose by contradiction that\linebreak $g(\tau_{-i},\tau_i)\neq g(\tau_{-i},\theta_i)$. This means that $\theta_i\neq\tau_i$. Since $g$ has essentially unique dominant strategies, there is $\mu_{-i}\in\Theta_{-i}$ such that $u_i(g(\mu_{-i},\theta_i)|\theta_i)>u_i(g(\mu_{-i},\tau_i)|\theta_i)$. This contradicts (\ref{Eq:proof-pro-int1}), which holds for arbitrary $\tau_{-i}\in\Theta_{-i}$.

Let $\tau\in\Theta$ be in the support of $\sigma(\cdot|\theta)$. Then $g(\tau)=g(\tau_{-i},\theta_i)$. By Lemma~\ref{Lem:interior}, $\theta_i$ is in the support of $\sigma_i(\cdot|\theta_i)$. Thus, $(\tau_{-i},\theta_i)$ is also in the support of $\sigma(\cdot|\theta)$. By iterating for the other agents we get that $g(\tau)=g(\theta)$.
\end{proof}

\bibliography{ref-empirical-imp}

\newpage
\setcounter{page}{1}
\begin{center}
\Huge{Appendix not for publication}

\textbf{Empirical strategy-proofness}

\end{center}
\begin{center}

Rodrigo A. Velez and Alexander L. Brown

Texas A\&M University

January 8th, 2020
\end{center}

\section*{Student Proposing Deferred Acceptance rule (SPDA) has essentially unique dominant strategies}

The following discussion uses the standard language in school choice problems \citep[c.f.][]{Abdulkadiroglu-Sonmez-2003-AER}. Suppose that preferences are strict and starting from a profile in which student~$i$ is truthful, she changes her report but does not change the relative ranking of her assignment with respect to the other assignments. The SPDA assignment for the first profile, say $m$, is again stable for the second profile. Thus, for the new profile, each other agent is weakly better off. Agent $i$'s allotment is the same in both markets because SPDA is strategy-proof. If another agent changes her allotment, it is because the new SPDA assignment was blocked in the original profile. Since the preferences of the other agents did not change, agent $i$ needs to be in the blocking pair for the new assignment in the original market. However, this means she is in a blocking pair for the new assignment in the new market. Thus, with this type of lie, agent $i$ cannot change the allotment of anybody else. If agent $i$ changes the relative ranking of her allotment in the original market, she can be worse off with the lie. For instance, suppose that she moves $m_j$ from her lower contour set at her allotment to the upper contour set. In the preference profile in which each agent different from $i$ and $j$ ranks top her allotment at $m$, and in which agent $j$ ranks $m_i$ top, agent $i$ receives $m_j$ in the SPDA assignment. 

\section*{Robust Nash implementation}

\begin{proof}[Proof of Theorem~\ref{Thm:sec-impl}]Suppose that statement 1 is satisfied. Our argument in the proof of Theorem~\ref{Thm:Interior}, taking $\sigma$ as a Bayesian Nash equilibrium of $(M,\varphi,p)$ for the interior $p$ defined there,  implies that $g$ is strategy proof. We now prove that $g$ is non-bossy in welfare-outcome and satisfies the outcome rectangular property. Our proof follows closely that of \citet[Proposition 3,][]{Adachi-2014-GEB}. By \citet[Proposition 3,][]{Saijo-et-al-2007-TE}, it is enough to prove that for each pair $\{\theta,\theta'\}\subseteq\Theta$, if for each $i\in N$, $u_i(g(\theta')|\theta_i)=u_i(g(\theta'_{-i},\theta_i)|\theta_i)$, then $g(\theta)=g(\theta')$. Thus, let $\{\theta,\theta'\}\subseteq\Theta$, and suppose that for each $i\in N$,

\begin{equation}u_i(g(\theta')|\theta_i)=u_i(g(\theta'_{-i},\theta_i)|\theta_i).\label{Eq:Th-sec-impl2}\end{equation}

Consider a prior $p$ that places uniform probability on the set $\{(\theta_{-i}',\mu_i):i\in N, \mu_i\in\{\theta_i,\theta_i'\}\}$. Let $\sigma$ be a Bayesian Nash equilibrium of $(M,\varphi,p)$, which always exists because the mechanism is finite.  Let $i\in N$,  $m_i$ in the support of $\sigma_i(\cdot|\theta_i)$,  $m_i'$ in the support of $\sigma_i(\cdot|\theta_i')$, and $\hat m_{-i}$ in the support of $\sigma_{-i}(\cdot|\theta_{-i}')$. By statement 1,

\begin{equation}\varphi(\hat m_{-i},m_i')=g(\theta')\textrm{ and }\varphi(\hat m_{-i},m_i)=g(\theta_{-i}',\theta_i).\label{Eq:Th-sec-impl1}\end{equation}

Thus, by (\ref{Eq:Th-sec-impl2}),
\[\sum_{\hat m_{-i}\in M_{-i}}u_i(\varphi(\hat m_{-i},m_i)|\theta_i)\sigma_{-i}(\cdot|\theta_{-i}')= \sum_{\hat m_{-i}\in M_{-i}}u_i(\varphi(\hat m_{-i},m_i')|\theta_i)\sigma_{-i}(\cdot|\theta_{-i}').\]
Since agent $i$ knows the type of the other agents is $\theta_{-i}'$ when she draws type $\theta_i$, equilibrium behavior implies that for each $\hat m_i\in M_i$,
\[\sum_{\hat m_{-i}\in M_{-i}}u_i(\varphi(\hat m_{-i},m_i)|\theta_i)\sigma_{-i}(\cdot|\theta_{-i}')\geq \sum_{\hat m_{-i}\in M_{-i}}u_i(\varphi(\hat m_{-i},\hat m_{i})|\theta_i)\sigma_{-i}(\cdot|\theta_{-i}').\]
By the last two displayed equations, for each $\hat m_i\in M_i$,
\[\sum_{\hat m_{-i}\in M_{-i}}u_i(\varphi(\hat m_{-i},m_i')|\theta_i)\sigma_{-i}(\cdot|\theta_{-i}')\geq \sum_{\hat m_{-i}\in M_{-i}}u_i(\varphi(\hat m_{-i},\hat m_{i})|\theta_i)\sigma_{-i}(\cdot|\theta_{-i}').\]
Thus, if $\mu$ is a behavior strategy such that $\mu(\cdot|\theta)=\sigma(\cdot|\theta')$,
for each $\hat m_i\in M_i$,
\[\sum_{\hat m_{-i}\in M_{-i}}u_i(\varphi(\hat m_{-i},m_i')|\theta_i)\mu_{-i}(\cdot|\theta_{-i})\geq \sum_{\hat m_{-i}\in M_{-i}}u_i(\varphi(\hat m_{-i},\hat m_{i})|\theta_i)\mu_{-i}(\cdot|\theta_{-i}).\]
Thus, $\mu$ is a Nash equilibrium of $(M,\varphi,\theta)$. By statement 1, $\varphi(m')=g(\theta)$. Thus, $g(\theta)=g(\theta')$.

Finally, we show that statement 1 follows from statement 2. Let $\sigma$ be a Bayesian Nash equilibrium of $(\Theta,g,p)$ for some common prior $p$. Let $\theta$ in the support of $p$ and $\tau$ be in the support of $\sigma(\cdot|\theta)$. Observe that equation (\ref{Eq:Thestr-proof1}) in our proof of Theorem~\ref{Th:str-pr} holds when $g$ is \textit{strategy-proof} and \textit{non-bossy in welfare-outcome}. Thus, for each $i\in N$, $g(\tau_{-i},\theta_i)=g(\tau)$. Then, by the outcome rectangular property, we have that $g(\tau)=g(\theta)$.
\end{proof}

\section*{Social choice correspondences}

We now show that our results depend on our restriction to social choice functions. That is, our requirement that the social planner's objective be summarized on a function that selects a unique determinate outcome for each social state. Since mixed strategy equilibria are essential in our analysis, a generalization of our model requires that we first reconsider the role of mixed strategies in Bayesian implementation. Indeed, in some environments, almost all pure strategy equilibria of a mechanism may be completely wiped out by the empirical equilibrium refinement, while a continuum of mixed strategy equilibria survive \citep{Velez-Brown-2018-EI}.

An alternative that we find appealing as a starting point is to study typical Bayesian implementation \citep{Jackson-1991-Eca} in a finitely generated model in which the social planner selects probability measures on outcomes for each social state. More precisely, for a finite outcome space $X$ let $\Theta$ be a payoff type space as defined in our model. A (random) social choice function associates with each type profile a probability distribution on $X$, i.e., $g:\Theta\rightarrow\Delta(X)$. A mechanism $(M,\varphi)$ is defined as usual, but allowing for randomization, i.e., $\varphi:M\rightarrow\Delta(X)$. A (random) social choice set $G$ is a subset of social choice functions. Then one can determine the success of a mechanism from the point of view of a mechanism designer who identifies $G$ as desirable by comparing the equilibria of $(M,\varphi,p)$ with the elements of $G$.

The following example shows that strategy-proofness is not necessary to obtain a meaningful form of robust implementation in empirical equilibrium when one allows for multi-valued objectives. That is, one can construct a finite $X$ and a payoff-type space $\Theta$ that admits a social choice set $G$ that contains no strategy-proof scf and for which there is a finite mechanism $(M,\varphi)$ such that for each common prior $p$ and each empirical equilibrium of $(M,\varphi,p)$, say $\sigma$, there is an element of $G$ that coincides with the induced conditional measures  $\theta\mapsto \varphi(\sigma(\cdot|\theta))$ in the support of $p$.

\begin{example}\rm\label{Ex:BMorris}Consider the following modification of \citet[Example 2]{Bergemann-Morris-2005-Eca}: $\Theta_1\equiv\{\theta_1,\theta_1',\theta_1''\}$, $\Theta_2\equiv\{\theta_2,\theta_2'\}$, $X\equiv\Delta(\{a,b,c,d,a',b',c',d'\})$,

\begin{center}
\begin{tabular}{|c|c|c|c|c|c|c|c|c|}
\hline
$u_1$&$a$&$b$&$c$&$d$&$a'$&$b'$&$c'$&$d'$\\\hline
$\theta_1$&1&-1&$1/2-\varepsilon$&-1&-1&1&-1&$1/2-\varepsilon$\\\hline
$\theta_1'$&0&0&1&0&0&0&1&0\\\hline
$\theta_1''$&0&0&0&1&0&0&0&1\\\hline
\end{tabular}
\end{center}
and
\begin{center}
\begin{tabular}{|c|c|c|c|c|c|c|c|c|}
\hline
$u_2$&$a$&$b$&$c$&$d$&$a'$&$b'$&$c'$&$d'$\\\hline
$\theta_2$&$\varepsilon$&1&0&0&0&$1-\varepsilon$&$-1$&$-1$\\\hline
$\theta_2'$&$1-\varepsilon$&0&$-1$&$-1$&1&$\varepsilon$&0&0\\\hline
\end{tabular}
\end{center}
Let $F$ be the correspondence that assigns to each type profile the set of probability distributions on outcomes in the following table.
\begin{center}
\begin{tabular}{|c|c|c|}
\hline
&$\theta_2$&$\theta_2'$\\\hline
$\theta_1$&$\Delta(\{a,b\})$&$\Delta(\{a',b'\})$\\\hline
$\theta_1'$&$\{c\}$&$\{c'\}$\\\hline
$\theta_1''$&$\{d\}$&$\{d'\}$\\\hline
\end{tabular}
\end{center}
Let $G$ be the social choice set of all scfs $g$ such that for each $\theta$, $g(\theta)\in F(\theta)$.

An argument as that in \citet{Bergemann-Morris-2005-Eca} shows that if $\varepsilon<(9-\sqrt{65})/8$, there is no strategy-proof scf $g$ such that for each $\theta\in\Theta$, $g(\theta)\in F(\theta)$. Thus, there is no strategy-proof scf in $G$.

Finally, let $(M,\varphi)$ be the mechanism where $M_1\equiv\{m^1_1,m^2_1,m^3_1,m^4_1\}$, $M_2\equiv\{m^1_2,m^2_2\}$, and $\varphi$ is given by:
\begin{center}
\begin{tabular}{|c|c|c|c|c|}
\hline
&$m^1_1$&$m^2_1$&$m^3_1$&$m^4_1$\\\hline
$m^1_2$&$a$&$b$&$c$&$d$\\\hline
$m^2_2$&$a'$&$b'$&$c'$&$d'$\\\hline
\end{tabular}
\end{center}
Consider a common prior $p$. Observe that  $m^1_2$ is strictly dominant for payoff type $\theta_2$ and $m^2_2$ is strictly dominant for payoff type $\theta_2'$. Thus, in each Nash equilibrium of $(M,\varphi,p)$ these payoff types play these strategies with probability one. Now, consider agent 1 with type $\theta_1$. Clearly, $m^1_1$ weakly dominates $m^3_1$ and $m^2_1$ weakly dominates $m^4_1$. Moreover, if the expected value of $m^1_1$ is the same as that for $m^3_1$, we have that the  expected value of $m^2_1$ is greater than that of  $m^4_1$. Thus, agent $1$ with type $\theta_1$ will never play $m^1_3$  nor $m^4_1$ in a Bayesian Nash equilibrium of $(M,\varphi,p)$.  Note also that agent $1$ with types $\theta_1'$ and $\theta_1''$ has strictly dominant actions $m_1^3$ and $m_1^4$, respectively. Thus, for each $p$, each empirical equilibrium of $(M,\varphi,p)$, say $\sigma$, and each realization of payoff types $\theta\in\Theta$, $\sigma(\cdot|\theta)$ induces a measure on $X$ that belongs to $F(\theta)$. $\qed$\end{example}

\end{document}